\documentclass[11pt]{article}

\usepackage[utf8]{inputenc} 
\usepackage{fullpage}
\usepackage{amsfonts,amsmath,amsthm,amssymb}
\usepackage{natbib}

\usepackage{graphicx}
\usepackage{subfigure}
\usepackage{tikz}
\usepackage{pgfplots}
\pgfplotsset{compat=newest}

\usepackage{hyperref}
\definecolor{mydarkblue}{rgb}{0,0.08,0.45}
\hypersetup{
   unicode=true,
   colorlinks=true,
   linkcolor=mydarkblue,
   citecolor=mydarkblue,
   filecolor=mydarkblue,
   urlcolor=mydarkblue,
   pdfview=FitH
}

\usepackage{booktabs, dsfont, enumitem, mathtools, microtype, nicefrac}

\usepackage{packages/algorithm}
\usepackage{packages/algorithmic}

\newtheorem{lemma}{Lemma}[section]
\newtheorem{theorem}[lemma]{Theorem}
\newtheorem{claim}[lemma]{Claim}

\newtheorem{definition}{Definition}

\newcommand{\ONE}{\mathds{1}}

\newcommand{\bfw}{\mathbf{w}}

\newcommand{\openLP}{\hrule height 0.8pt\rule{0pt}{1pt}} 
\newcommand{\closeLP}{\rule{0pt}{1pt}\hrule height 0.4pt\rule{0pt}{1pt}} 

\newcommand{\calC}{\mathcal{C}}
\newcommand{\calS}{\mathcal{S}}

\newcommand{\pr}{\mathrm{Pr}}

\newcommand{\In}{\text{In}}
\newcommand{\Out}{\text{Out}}

\newcommand{\given}{\;|\;}

\newcommand{\PlusSign}{\text{``$+$''}}
\newcommand{\MinusSign}{\text{``$-$''}}


\newcommand{\NoSign}{\text{``$\circ$''}}

\title{\texorpdfstring{Correlation Clustering with \\ Asymmetric Classification Errors}{Correlation Clustering with Asymmetric Classification Errors}\footnote{The conference version of this paper appeared in the proceedings of ICML 2020.}}
\author{
    Jafar Jafarov\thanks{Equal contribution. Jafar Jafarov and Yury Makarychev were supported in part by NSF CCF-1718820 and
NSF TRIPODS CCF-1934843.
Sanchit Kalhan and Konstantin Makarychev were supported in part by NSF TRIPODS CCF-1934931.} \\ University of Chicago 
 \and
    Sanchit Kalhan\footnotemark[2] \\ Northwestern University 
 \and
    Konstantin Makarychev\footnotemark[2] \\ Northwestern University 
 \and
    Yury Makarychev\footnotemark[2] \\ Toyota Technological Institute at Chicago
}
\date{}
\date{}
\begin{document}
\maketitle

\begin{abstract}
In the Correlation Clustering problem, we are given a weighted graph $G$
with its edges labelled as ``similar'' or ``dissimilar'' by a binary
classifier. The goal is to produce a clustering that minimizes the weight
of ``disagreements'': the sum of the weights of ``similar'' edges across
clusters and ``dissimilar'' edges within clusters. We study the correlation
clustering problem under the following assumption: Every ``similar'' edge
$e$ has weight $\bfw_e\in[\alpha \bfw, \bfw]$ and every ``dissimilar'' edge
$e$ has weight $\bfw_e\geq \alpha \bfw$ (where $\alpha\leq 1$ and $\bfw>0$
is a scaling parameter). We give a $(3 + 2 \log_e (1/\alpha))$
approximation algorithm for this problem. This assumption captures well the
scenario when classification errors are asymmetric. Additionally, we show
an asymptotically matching Linear Programming integrality gap of
$\Omega(\log 1/\alpha)$.
\end{abstract}

\section{Introduction}
In the Correlation Clustering problem, we are given a set of objects with pairwise similarity
information. Our aim is to partition these objects into clusters that match this information
as closely as possible. The pairwise information is represented as a weighted graph $G$ whose
edges are labelled as ``positive/similar'' and ``negative/dissimilar'' by a noisy binary classifier.
The goal is to find a clustering $\calC$ that minimizes the weight of edges disagreeing
with this clustering: A positive edge is in disagreement with $\calC$, if its endpoints belong to distinct clusters;
and a negative edge is in disagreement with $\calC$ if its endpoints belong to the same cluster. We call this objective the MinDisagree objective.
The MinDisagree objective has been
extensively studied in literature since it was introduced by~\citet*{BBC04}~(see e.g., \cite{CGW03, DEFI06, ACN08,pan2015, CMSY15}).
There are currently two standard models for Correlation Clustering which we will refer to as (1) Correlation Clustering on Complete Graphs
and (2) Correlation Clustering with Noisy Partial Information. In the former model, we assume that graph
$G$ is complete and all edge weights are the same i.e., $G$ is unweighted. In the latter model, we do not make any assumptions
on the graph $G$. Thus, edges can have arbitrary weights and some edges may be missing. These models are quite different
from the computational perspective. For the first model, \citet*{ACN08} gave a 2.5 approximation algorithm. This approximation
factor was later improved to 2.06  by Chawla, Makarychev, Schramm, and Yaroslavtsev [2015]. For the second model,
\citet*{CGW03} and \citet*{DEFI06} gave an $O(\log n)$ approximation algorithm, they
also showed that Correlation Clustering with Partial Noisy Information is as hard as
the Multicut problem and, hence, $O(\log n)$ is likely to be the best possible approximation for this problem.
In this paper, we show how to interpolate between these two models for Correlation Clustering.

We study the Correlation Clustering problem on complete graphs
with edge weights. In our model, the weights on the
edges are constrained such that the ratio of the lightest edge in the graph
to the heaviest positive edge is at least $\alpha \leq 1$. Thus, if $\bfw$ is the
weight of the heaviest positive edge in the graph, then each positive edge
has weight in $[\alpha \bfw, \bfw]$ and each negative edge has weight greater
than or equal to $\alpha \bfw$.
We argue that this model -- which we call Correlation Clustering with Asymmetric Classification Errors -- is more adept at capturing the subtleties in real world instances than the two standard models. Indeed, the assumptions made by the Correlation Clustering on Complete Graphs model are too strong, since rarely do real world instances have equal edge weights. In contrast, in the Correlation Clustering with Noisy Partial Information model we can have edge weights that are arbitrarily small or large, an assumption which is too weak.
In many real world instances, the edge weights lie in some range $[a, b]$ with $a,b > 0$. Our model captures a larger family of instances.

Furthermore, the nature of classification errors for objects that are
similar and objects that are dissimilar is quite different. In many cases, a
\textit{positive} edge $uv$ indicates that the classifier found some actual
evidence that $u$ and $v$ are similar; while a negative edge simply means
that the classifier could not find any such proof that $u$ and $v$ are
similar, it does not mean that the objects $u$ and $v$ are necessarily
dissimilar. In some other cases, a \textit{negative} edge $uv$ indicates that
the classifier found some evidence that $u$ and $v$ are dissimilar; while a
positive edge simply means that the classifier could not find any such proof.
We discuss several examples below. Note that in the former case, a positive
edge gives a substantially stronger signal than a negative edge and should
have a higher weight; in the latter, it is the other way around: a negative
edge gives a stronger signal than a positive edge and should have a higher
weight. We make this statement more precise in Section~\ref{sec:LML}.

The following examples show how the Correlation Clustering with Asymmetric
Classification Errors model can help in capturing real world instances.
Consider an example from the paper on Correlation Clustering
by~\citet*{pan2015}. In their experiments, \citet{pan2015} used several data
sets including \emph{dblp-2011} and \emph{ENWiki-2013}\footnote{These data
sets are published by~\cite{dataset1, dataset2, dataset3, dataset4}}. In the
graph~\emph{dblp-2011}, \emph{each vertex represents a scientist and two
vertices are connected with an edge if the corresponding authors have
co-authored an article}. Thus, a positive edge with weight $\bfw^+$ between
Alice and Bob in the Correlation Clustering instance indicates that Alice and
Bob are coauthors, which strongly suggests that Alice and Bob work in similar
areas of Computer Science. However, it is not true that all researchers
working in some area of computer science have co-authored papers with each
other. Thus, the negative edge that connects two scientists who do not have
an article together does not deserve to have the same weight as a positive
edge, and thus can be modeled as a negative edge with weight $\bfw^- <
\bfw^+$.

Similarly, the vertices of the graph \emph{ENWiki-2013} are Wikipedia pages.
Two pages are connected with an edge if there is a link from one page to
another. A link from one page to the other is a strong suggestion that the
two pages are related and hence can be connected with a positive edge of
weight $\bfw^+$, while it is not true that two similar Wikipedia pages
necessarily should have a link from one to the other. Thus, it would be
better to join such pages with a negative edge of weight $\bfw^- < \bfw^+$.

Consider now the multi-person tracking problem. The problem is modelled as a
Correlation Clustering or closely related Lifted Multicut
Problem~\cite{tang2016multi,tang2017multiple} on a graph, whose vertices are
people detections in video sequences. Two detections are connected with a
positive or negative edge depending on whether the detected people have
similar or dissimilar appearance (as well as some other information). In this
case, a negative edge $(u,v)$ is more informative since it signals that the
classifier has identified body parts that do not match in detections $u$ and
$v$ and thus the detected people are likely to be different (a positive edge
$(u, v)$ simply indicates that the classifier was not able to find
non-matching body parts).

The Correlation Clustering with Asymmetric Classification Errors model
captures the examples we discussed above. It is instructive to consider an
important special case where all positive edges have weight $\bfw^+$ and all
negative edges have weight $\bfw^-$ with $\bfw^+ \neq \bfw^-$. If we were to
use the state of the art algorithm for Correlation Clustering on Complete
Graphs on our instance for Correlation Clustering with Asymmetric
Classification Errors (by completely ignoring edge weights and looking at the
instance as an unweighted complete graph), we would get a
$\Theta(\max(\nicefrac{\bfw^+}{\bfw^-}, \nicefrac{\bfw^-}{\bfw^+}))$
approximation to the MinDisagree objective. While if we were to use the state
of the art algorithms for Correlation Clustering with Noisy Partial
Information on our instance, we would get a $O(\log n)$ approximation to the
MinDisagree objective.

\noindent \textbf{Our Contributions.} In this paper, we present an
approximation algorithm for Correlation Clustering with Asymmetric
Classification Errors. Our algorithm gives an approximation factor of $A = 3 + 2\log_e
\nicefrac{1}{\alpha}$. Consider the scenario discussed above where all
positive edges have weight $\bfw^+$ and all negative edges have weight
$\bfw^-$. If $\bfw^+ \geq \bfw^-$, our algorithm gets a $(3 + 2\log_e
\bfw^+/\bfw^-)$ approximation; if $\bfw^+ \leq \bfw^-$, our algorithm gets a
$3$-approximation.

\begin{definition}
Correlation Clustering with Asymmetric Classification Errors is a variant of
Correlation Clustering on a Complete Graph. We assume that the weight
$\bfw_e$ of each positive edge lies in $[\alpha \bfw, \bfw]$ and the weight
$\bfw_e$ of each negative edge lies in $[\alpha \bfw, \infty)$, where $\alpha
\in (0,1]$ and $\bfw > 0$.
\end{definition}

We note here that the assumption that the weight of positive edges is bounded from above is crucuial. Without this assumption (even if we require that negative weights are bounded from above and below), the LP gap is unbounded for every fixed $\alpha$ (this follows from the integrality gap example we present in Theorem~\ref{thm:intGap}).

The following is our main theorem.

\begin{theorem}\label{thm:main}
There exists a polynomial time $A = 3 + 2\log_e 1/\alpha$ approximation
algorithm for Correlation Clustering with Asymmetric Classification Errors.
\end{theorem}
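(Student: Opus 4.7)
My plan is to round a standard LP relaxation of the MinDisagree objective using a randomized pivoting scheme with a ball-growing step whose radius distribution is tuned to $\alpha$. I would introduce a variable $x_{uv} \in [0,1]$ for every pair of vertices, impose the triangle inequalities $x_{uv} \le x_{uw} + x_{wv}$, and minimize
\[
\mathrm{LP} \;=\; \sum_{e=uv \in E^+} w_e\, x_{uv} \;+\; \sum_{e=uv \in E^-} w_e\, (1 - x_{uv}).
\]
This is a valid lower bound on the MinDisagree optimum and is solvable in polynomial time.

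The rounding iteratively peels off clusters. In each iteration, one picks a pivot $p$ uniformly at random among the unclustered vertices, samples a radius $r$ from a distribution $\mathcal{D}_\alpha$ supported on $[0,1/2]$, and outputs $\{v : x_{pv}\le r\}$ as a new cluster. I would design $\mathcal{D}_\alpha$ to have density roughly proportional to $1/r$ on an interval whose lower endpoint is $\Theta(\alpha)$: this is the standard ``growing-ball'' density used in region-growing arguments, but truncation at a $\Theta(\alpha)$-level endpoint yields an $O(\log(1/\alpha))$ overhead rather than the $O(\log n)$ overhead of the Noisy Partial Information setting.

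The analysis would split the expected algorithmic cost into two charges. For a positive edge $e = uv$, conditioned on pivot $p$, the edge is cut iff $r$ lies between $x_{pu}$ and $x_{pv}$; this interval has length at most $x_{uv}$ by the triangle inequality, and with the $1/r$-like density of $\mathcal{D}_\alpha$ the cutting probability is at most roughly $2\log(1/\alpha)\cdot x_{uv}$, giving a contribution of at most $2\log(1/\alpha)\cdot\mathrm{LP}$ from cut positive edges. For a negative edge $e = uv$, I would follow an Ailon--Charikar--Newman style per-triple analysis: the probability that both $u$ and $v$ join $p$'s cluster is at most $O(1)\cdot(1 - x_{uv})$, contributing the constant ``3'' term. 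Together these would yield the desired $(3 + 2\log(1/\alpha))\cdot\mathrm{LP}$ upper bound on the expected algorithm cost.

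The main obstacle is designing $\mathcal{D}_\alpha$ so that both bounds hold simultaneously: a pure region-growing distribution tends to over-cut negative edges (pairs that should end up together), whereas pure deterministic pivoting in the style of ACN over-cuts positive edges. To handle negative edges whose LP distance is close to $1$ (a regime where the $1/r$-density behaves poorly near $r=1/2$), I would exploit the weight lower bound $w_e \ge \alpha w$: the LP already pays $w_e(1-x_{uv})$, and because $w_e \ge \alpha w$ this payment is not too small compared to $\alpha\bfw$, so truncating the radius distribution away from $1/2$ costs only a bounded factor on such edges. A secondary technical step is propagating the per-pivot local analysis through the iterative peeling; this is handled as in ACN by analysing each unordered triple $\{u,v,p\}$ and using the uniform random choice of pivot so that the amortized cost composes cleanly across iterations.
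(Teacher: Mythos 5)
Your high-level plan — solve the metric LP, iteratively peel off clusters using a random pivot and a random radius, and analyze per unordered triple in the Ailon–Charikar–Newman / Chawla–Makarychev–Schramm–Yaroslavtsev style — matches the paper's structure, and your instinct that the weight bounds must be used for cross-charging within a triple is correct. But the specific radius distribution you propose does not work, and fixing it is the crux of the argument.

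You propose sampling $r$ with density roughly proportional to $1/r$ on $[\Theta(\alpha),1/2]$, and you assert that a positive edge $uv$ is then cut with probability at most roughly $2\log(1/\alpha)\cdot x_{uv}$. That bound cannot hold pointwise: the $1/r$ density, normalized over $[\Theta(\alpha),1/2]$, has value $\approx 1/(r\log(1/\alpha))$, which near $r\approx\alpha$ is $\approx 1/(\alpha\log(1/\alpha))$. A positive edge with $x_{pu}\approx 2\alpha$, $x_{pv}\approx 4\alpha$ is cut with probability $\approx \log 2/\log(1/\alpha)$, while $x_{uv}\le 2\alpha$, so the ratio of cut probability to $x_{uv}$ is $\Theta\bigl(1/(\alpha\log(1/\alpha))\bigr)$, unbounded as $\alpha\to 0$. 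The multiplicative (Garg–Vazirani–Yannakakis region-growing) accounting that makes $1/r$ densities useful elsewhere does not translate into the additive per-triple comparison $ALG^\sigma\le \rho\,LP^\sigma$ that the pivoting framework needs. The paper instead takes the CDF of the radius to be $f_\alpha(x)=1-e^{-Ax}$ on $[0,\tfrac12-\tfrac1{2A})$ (and $1$ beyond), i.e.\ an exponential-type density; the exponential form is exactly what makes the key inequality $f_\alpha(x_k)-f_\alpha(x_j)=e^{-Ax_j}\bigl(1-e^{-A(x_k-x_j)}\bigr)\le e^{-Ax_j}A(x_k-x_j)\le A\bigl(1-f_\alpha(x_j)\bigr)x_i$ go through via $1-e^{-t}\le t$ and the triangle inequality, which is what it means for a positive edge to ``pay for itself'' locally.

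Two further points where the sketch misstates the mechanism. First, the decomposition into ``positive edges contribute $2\log(1/\alpha)\cdot LP$, negative edges contribute the constant $3$'' is not how the bound arises: in the paper's analysis negative edges always pay for themselves (by an inequality using only the triangle inequality and $f_\alpha$ non-decreasing), and the constant $3$ is baked into the single coefficient $A=3+2\log_e(1/\alpha)$ rather than being a separate charge for negative edges. Second, the place where the weight bounds $w_e\in[\alpha\bfw,\bfw]$ (positive) and $w_e\ge\alpha\bfw$ (negative) are actually needed is not the regime of long negative edges; it is the regime of a \emph{short} positive edge $e_1$ with $x_1<1/A$, which may have $t_1<0$. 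There one uses $w_1\le\bfw$ and $w_2,w_3\ge\alpha\bfw$ to argue $w_1t_1+w_2t_2+w_3t_3\ge t_1+\alpha(t_2+t_3)$, and then shows the surplus of the two longer edges covers the deficit. Truncating $f_\alpha$ to $1$ at $\tfrac12-\tfrac1{2A}$ is what makes the surplus large enough, and is also why $f_\alpha$ must satisfy $1-f_\alpha\bigl(\tfrac12-\tfrac3{2A}\bigr)\le\alpha$, which forces $A\ge 3+2\log_e(1/\alpha)$. Without this cross-charging and the exponential $f$, the argument does not close.
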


We also study a natural extension of our model to the case of complete bipartite graphs. That is, the positive edges across the biparition have a weight between $[\alpha \bfw, \bfw]$ and the negative edges across the bipartition have a weight of at least $\alpha \bfw$. Note that the state-of-the-art approximation algorithm for Correlation Clustering on Unweighted Complete Bipartite Graphs has an approximation factor of $3$ (see ~\citet{CMSY15}).

\begin{theorem}\label{thm:mainBipartite}
There exists a polynomial time $A = 5 + 2\log_e 1/\alpha$ approximation
algorithm for Correlation Clustering with Asymmetric Classification Errors on complete bipartite graphs.
\end{theorem}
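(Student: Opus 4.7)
\textbf{Proof plan for Theorem~\ref{thm:mainBipartite}.} The plan is to adapt the LP-rounding algorithm that establishes Theorem~\ref{thm:main} to the complete bipartite setting. I would solve the standard correlation-clustering LP with a variable $x_{uv}\in[0,1]$ for every pair of vertices — including same-side pairs, which do not appear in the objective but are constrained by all triangle inequalities $x_{uv}\le x_{uw}+x_{wv}$ — where each positive cross edge contributes $\bfw_{uv}\,x_{uv}$ and each negative cross edge contributes $\bfw_{uv}(1-x_{uv})$ to the LP cost. The rounding is pivot-based: repeatedly pick a pivot $p$ uniformly at random from the remaining vertices, draw a threshold $r$ from the distribution on $[\alpha,1]$ with density proportional to $1/r$ (normalization $1/\log_e(1/\alpha)$), form the cluster $\{v:x_{pv}<r\}$, output it, and recurse on the rest. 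As in the complete-graph case, this specific threshold distribution produces the $2\log_e(1/\alpha)$ term in the approximation ratio, since for LP distances in $[\alpha,1]$ the probability that a pair is separated by the cluster scales like $|x_{pu}-x_{pv}|/\log_e(1/\alpha)$, while for distances below $\alpha$ both endpoints are always kept together.

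The analysis proceeds edge-by-edge. For every cross edge $uv$ in disagreement with the output, I would bound the expected cost $\bfw_{uv}\Pr[\text{bad}]$ by a constant multiple of the LP contribution of the edges in the triangle $\{p,u,v\}$, averaged over the random choice of pivot $p$. The asymmetric weight assumption enters here: since $\bfw_{uv}\le\bfw$ for positive edges while the LP contribution of any cross edge with fractional distance $x\ge\alpha$ is at least $\alpha\bfw\cdot\min(x,1-x)$, a cut positive edge with small fractional distance can still be charged through the pivot. The contributions from LP distances below $\alpha$ are absorbed into the additive constant of the approximation factor, while those in $[\alpha,1]$ contribute the logarithmic part.

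The key difference from the complete-graph analysis is that some of the charging triangles $(p,u,v)$ have $p$ on the same side of the bipartition as $u$ or $v$, in which case one leg of the triangle is not a cross edge and carries no LP weight. I would resolve this by invoking the triangle inequality on the same-side LP variable to bound it by the sum of two cross-side LP variables, and then charging the resulting bound to the two corresponding cross-edge LP contributions. This detour inflates the constant part of the pivot charge by a factor of two, which is what raises the base approximation constant from $3$ to $5$, while leaving the $2\log_e(1/\alpha)$ term unchanged because the threshold distribution is independent of the bipartite structure.

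The main obstacle I anticipate is executing precisely this separation cleanly: the constant and logarithmic parts of the pivot charge have to be tracked independently so that only the constant part absorbs the factor-of-two cost of routing through a cross vertex, since a naive bound that doubled the entire charge would give $6 + 4\log_e(1/\alpha)$ instead of the claimed $5 + 2\log_e(1/\alpha)$. The pivot case analysis also needs to enumerate, for each disagreement edge $uv$, whether $p$ lies on the same side as both endpoints, one endpoint, or neither, and to combine the resulting bounds so that the worst case matches the target factor. Once this bookkeeping is in place, the correctness and polynomial runtime of the algorithm follow from the standard pivot-based framework, completing the proof of Theorem~\ref{thm:mainBipartite}.
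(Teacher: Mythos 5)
Your high-level plan — pivot rounding of the standard LP plus a triangle-charging argument — is the same framework the paper uses, but two of your concrete choices don't hold up as stated, and the factor-of-$5$ arithmetic is not justified.

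\paragraph{The radius distribution is not safe.} You propose drawing $r$ with density $\propto 1/r$ on $[\alpha,1]$ and forming the cluster $\{v : x_{pv} < r\}$. With positive probability $r$ exceeds $1/2$, so the cluster can contain two vertices $u,v$ with $x_{pu}, x_{pv}$ close to $1$ and $x_{uv} = 1$. If $(u,v)$ is a negative edge, the algorithm incurs $\bfw_{uv}$ while the LP contribution $\bfw_{uv}(1 - x_{uv})$ of that edge is $0$, so the per-triangle charge is unbounded. The paper's $f_\alpha$ is engineered specifically to avoid this: it jumps to $1$ at $x = \tfrac12 - \tfrac{1}{2A}$, so the effective cluster radius never reaches $\tfrac12$ and every negative edge internal to a cluster has LP slack at least $\tfrac1A$. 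You need an analogous hard cap; without it the triple bound $ALG^\sigma \le \rho\, LP^\sigma$ required by Theorem~\ref{thm:L4} fails.

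\paragraph{The constant does not come from a factor-of-two inflation.} You argue the base constant goes from $3$ to $5$ because one leg of the triangle is a missing (intra-side) edge and its LP value must be routed through the pivot, doubling "the constant part of the charge." But $3 \times 2 = 6$, and your proposed repair — double only part of the constant — is not carried out; there is no exhibited decomposition of $3$ into pieces for which this gives exactly $5$. In the paper the extra $+2$ appears for a different reason: after assigning weight $0$ to same-side pairs and treating them as positive edges, the case analysis on signatures $(\PlusSign,\PlusSign,\text{missing})$, $(\PlusSign,\text{missing},\PlusSign)$, $(\PlusSign,\text{missing},\MinusSign)$ needs $e^{A(x_2-x_1)} \ge 1/\alpha$ and the final inequality reduces to $\tfrac{A-5}{2} > 0$; both force $A = 5 + 2\log_e(1/\alpha)$ rather than $3 + 2\log_e(1/\alpha)$. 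Your heuristic points in the right direction but does not establish the stated constant.

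\paragraph{Missing device.} To apply the triple-based Theorem~\ref{thm:L4} on a bipartite instance you must decide what to do with the nonexistent same-side edges. The paper's move is to add them as \emph{positive} edges of weight $0$, which then slots cleanly into Claims~\ref{cl:what_t_i_looks_like}--\ref{cl:all_pay} and lets Claim~\ref{positive_is_enough} assume $\sigma_1 = \sigma_2 = \PlusSign$. You implicitly use that these legs have zero LP weight, but you don't say how they interact with the charging lemma, and your substitute (bounding the same-side LP variable by two cross variables) is a different and unverified step that, if pushed through naively, does indeed threaten to double the logarithmic term as you worry. Making the separation of constant and logarithmic contributions rigorous is exactly the part you flag as an obstacle; in the paper it never arises because the missing-edge legs are simply zero-weight and the analysis re-runs the complete-graph case split with the larger $A$.
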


Our next result shows that this approximation ratio is likely best possible for LP-based algorithms.
We show this by exhibiting an instance of Correlation Clustering with
Asymmetric Classification Errors such that integrality gap for the natural LP
for Correlation Clustering on this instance is $\Omega(\log
\nicefrac{1}{\alpha})$.

\begin{theorem}\label{thm:intGap}
The natural Linear Programming relaxation for Correlation Clustering has an
integrality gap of $\Omega(\log \nicefrac{1}{\alpha})$ for instances of
Correlation Clustering with Asymmetric Classification Errors.
\end{theorem}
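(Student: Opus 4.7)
The integrality gap will be exhibited by a standard expander-based construction, scaled so that the instance has $n = \Theta(1/\alpha)$ vertices; then $\log n = \Theta(\log(1/\alpha))$. First I would fix a $\Delta$-regular expander $H = (V, E_H)$ on $n = \lceil 1/\alpha \rceil$ vertices with $\Delta = O(1)$ and edge-expansion constant $\phi = \Omega(1)$. On this vertex set I form a complete graph by declaring every edge of $H$ to be positive with weight $\bfw := 1$ and every non-edge of $H$ to be negative with weight $\alpha$. Positive weights then lie in $[\alpha, 1]$ and negative weights equal $\alpha$, so the instance belongs to the asymmetric classification errors model.

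Next I would exhibit a cheap fractional solution. Pick $L = c \log n$ for a small constant $c > 0$ (chosen below), and set $x_{uv} = \min\{d_H(u,v)/L,\, 1\}$, where $d_H$ denotes graph distance in $H$. Since this is a (truncated) scaled shortest-path metric, the triangle inequalities are automatic and $x_{uv} \in [0,1]$. The positive-edge contribution is $|E_H| \cdot (1/L) = O(n/\log n)$. For the negative-edge contribution, only non-edges at distance less than $L$ can contribute; the number of such pairs is at most $n \cdot \Delta^L = n^{1 + c \log_2 \Delta}$. Choosing $c$ so that $c \log_2 \Delta < 1$, the negative contribution is at most $\alpha \cdot n^{1 + c \log_2 \Delta} = n^{c \log_2 \Delta} = o(n/\log n)$. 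Hence the total LP value is $O(n/\log n)$.

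Finally I would lower-bound the integral optimum by $\Omega(n)$ via case analysis on any clustering. If every cluster has size at most $n/2$, edge expansion of $H$ forces at least $\phi n/2$ positive edges to be cut across clusters, contributing $\Omega(n)$ to the cost. Otherwise some cluster $C^*$ has $|C^*| > n/2$, so it contains $\binom{|C^*|}{2} - O(n) = \Omega(n^2)$ pairs, all but $O(n)$ of which are negative; the resulting in-cluster disagreement cost is $\Omega(\alpha n^2) = \Omega(n)$, using $n = \Theta(1/\alpha)$. Dividing, the gap is $\Omega(n)/O(n/\log n) = \Omega(\log n) = \Omega(\log(1/\alpha))$. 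The main delicate point is the second step: $L = \Theta(\log n)$ must be large enough to make $|E_H|/L$ small yet small enough that the ball $B_H(v, L)$ remains of subpolynomial size, and this is precisely why a \emph{constant-degree} expander is essential---in a denser graph the negative contribution would swamp the positive contribution and the gap argument would fail.
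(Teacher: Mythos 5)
Your proposal is correct and follows essentially the same approach as the paper: a constant-degree expander with all edges positive (weight $1$) and all non-edges negative (weight $\alpha$), a truncated scaled shortest-path metric as the cheap LP solution, and a case split (balanced clusters vs.\ one dominant cluster) combined with edge expansion to lower-bound the integral optimum by $\Omega(n)$. The only difference is a cosmetic choice of parameters ($n = \Theta(1/\alpha)$ rather than the paper's $n = \Theta((\alpha^2\log^2\alpha)^{-1})$), which still makes the two cost terms balance out to give the same $\Omega(\log(1/\alpha))$ gap.
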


Moreover, we can show that if there is an $o(\log(1/\alpha))$-approximation algorithm whose running time is polynomial in both $n$ and $1/\alpha$, then there is an $o(\log n)-$approximation algorithm for the general weighted case
\footnote{The reduction to the general case works as follows. Consider an instance of Correlation Clustering with arbitrary weights. \emph{Guess} the heaviest edge $e$ that is in disagreement with the optimal clustering. Let $\bfw_e$ be its weight, and set
$\bfw = n^2 \bfw_e$, and $\alpha=1/n^4$. Then, assign new weights to all pairs of vertices in the graph. Keep the weights of all edges with weight in the range $[\alpha \bfw, \bfw]$. Set the weights of all edges with weight greater than $\bfw$ to $\bfw$ and the weights of all edges with weight less than $\alpha \bfw$ (including missing edges) to $\alpha \bfw$.}
(and also for the MultiCut problem). However, we do not know if there is an $o(\log(1/\alpha))-$approximation algorithm for the problem whose running time is polynomial in $n$ and \emph{exponential} in $1/\alpha$. The existence of such an algorithm does not imply that there is an $o(\log n)-$approximation algorithm for the general weighted case (as far as we know).

We show a similar integraplity gap result for the Correlation Clustering with Asymmetric Classification Errors on complete bipartite graphs problem.

\begin{theorem}\label{thm:intGapBipartite}
The natural Linear Programming relaxation for Correlation Clustering has an
integrality gap of $\Omega(\log \nicefrac{1}{\alpha})$ for instances of
Correlation Clustering with Asymmetric Classification Errors on complete bipartite graphs.
\end{theorem}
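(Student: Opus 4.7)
The plan is to lift the integrality gap instance of Theorem~\ref{thm:intGap} to the complete bipartite setting via a bipartite double cover. Let $G=(V,E)$ denote the non-bipartite gap instance, with natural-LP value $\mathrm{LP}(G)=L$ and every integer clustering of $G$ having cost $\Omega(L\log(1/\alpha))$. I define the bipartite instance $H$ on $V_0=V\times\{0\}$ and $V_1=V\times\{1\}$ as follows: for distinct $u,v\in V$, the two cross pairs $\{(u,0),(v,1)\}$ and $\{(v,0),(u,1)\}$ inherit the label and weight of the pair $\{u,v\}$ in $G$; for each $v\in V$, the pair $\{(v,0),(v,1)\}$ is a positive edge of weight $\bfw$. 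This yields a complete bipartite instance whose weights respect the asymmetric constraints of the model.

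Next I would show that any LP solution $x$ on $G$ lifts to a feasible LP solution $y$ on $H$ via $y_{(u,0)(v,1)}=x_{uv}$ for $u\neq v$, $y_{(u,0)(u,1)}=0$, and $y_{(u,i)(v,i)}=x_{uv}$ on same-side pairs (only needed to verify triangle inequalities, which reduce immediately to those of $x$). Since each unordered pair of $G$ appears twice among the cross pairs of $H$ and the identity cross pairs contribute $0$ to the objective, the lifted LP cost is exactly $2L$. For the integer side, I fix any clustering $\calC$ of $H$ and call $v\in V$ \emph{clean} if $\calC((v,0))=\calC((v,1))$; let $S$ be the set of unclean vertices. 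Each $v\in S$ forces the weight-$\bfw$ positive edge $\{(v,0),(v,1)\}$ into disagreement, giving $\mathrm{cost}_H(\calC)\geq |S|\bfw$. Projecting the clean vertices via $\calC'(v)=\calC((v,0))$ yields a clustering of $V\setminus S$ in which every disagreement on a $G$-edge between two clean vertices induces disagreements on \emph{both} of the corresponding cross edges of $H$, so $\mathrm{cost}_H(\calC)\geq 2\,\mathrm{cost}_{G[V\setminus S]}(\calC')$.

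The main obstacle is that the lower bound of Theorem~\ref{thm:intGap} is stated for the full graph $G$, not for a vertex-restricted subgraph $G[V\setminus S]$, so the inner argument could a priori collapse if $S$ erases enough of the gap gadget. I would handle this by working with a \emph{blown-up} version of the gap instance, replacing each original vertex by $K$ identical copies joined by positive edges of weight $\bfw$ (copies of $u$ and $v$ inherit the label and weight of $\{u,v\}$). Unless $|S|=\Omega(K\cdot|V|)$—in which case $|S|\bfw$ by itself already exceeds $\Omega(L\log(1/\alpha))$ for the appropriate choice of parameters—a pigeonhole argument ensures that every original vertex retains at least one clean copy, so $G[V\setminus S]$ still contains the gap instance and $\mathrm{cost}_{G[V\setminus S]}(\calC')=\Omega(L\log(1/\alpha))$. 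Balancing the two regimes yields $\mathrm{cost}_H(\calC)=\Omega(L\log(1/\alpha))$, which compared with the LP value $2L$ establishes the desired $\Omega(\log(1/\alpha))$ integrality gap on complete bipartite instances.
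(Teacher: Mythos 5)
Your proposal takes a genuinely different route from the paper. The paper proves this theorem directly: instead of lifting the non-bipartite gap instance, it simply \emph{starts} from a $3$-regular bipartite Ramanujan expander $G=(L,R,E)$, sets $E^+=E$, $E^-=(L\times R)\setminus E$, $\bfw^+=1$, $\bfw^-=\alpha$, and observes that the computation in Lemma~\ref{lemma:gap-computation} goes through verbatim, the only new observation being that a cluster containing $90\%$ of the vertices of a complete bipartite graph still contains $\Theta(n^2)$ edges. That is a three-line proof. Your bipartite-double-cover construction, with weight-$\bfw$ positive identity pairs $\{(v,0),(v,1)\}$ tying the two copies of a vertex together, is an elegant alternative idea, and the LP lift and the ``clean/unclean'' decomposition are set up correctly: the LP value doubles, each unclean vertex pays $\bfw$, and each $G$-disagreement between clean vertices shows up twice in $H$.

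However, there is a concrete gap in the key step. Your pigeonhole claim --- ``unless $|S|=\Omega(K\cdot|V|)$, every original vertex retains at least one clean copy'' --- is false: with $|S|=K=o(K|V|)$ an adversary can make all $K$ copies of a single original vertex unclean, and that vertex then has no clean copy. What the hypothesis $|S|=o(K|V|)$ actually gives you is only that the set $T$ of original vertices with \emph{no} clean copy satisfies $|T|\le |S|/K = o(|V|)$. So even after the blowup you are still forced to prove exactly the thing you flagged as ``the main obstacle'': that the $\Omega(n)$ integral lower bound of Lemma~\ref{lemma:gap-computation} persists for the vertex-deleted subinstance $G[V\setminus T]$. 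This is true (deleting $o(n)$ vertices from a $3$-regular expander removes only $o(n)$ edges, so both cases of the paper's lower-bound argument --- the expansion case and the large-cluster case --- survive), but it is the real content and your blowup does not sidestep it; it only rescales $|S|$ without changing the nature of the problem. Relatedly, the blowup multiplies $\mathrm{LP}(H)$ by $\Theta(K^2)$ while the surviving lower bound you extract from a single clean copy of $G$ is only $\Omega(L\log(1/\alpha))$, so the balancing forces $K=O(1)$ anyway. You would do better either to drop the blowup and prove the vertex-deleted-expander lower bound directly for $|S|<\varepsilon n$, or, as the paper does, to work with a bipartite expander from the start and avoid the double-cover machinery entirely.
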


Throughout the paper, we denote the set of positive edges by $E^+$ and the
set of negative edges by $E^-$. We denote an instance of the Correlation
Clustering problem by $G=(V, E^+,E^-)$. We denote the weight of edge $e$ by
$\bfw_e$.

\subsection{Ground Truth Model} \label{sec:LML}
In this section, we formalize the connection between asymmetric classification errors and asymmetric edge weights. For simplicity, we assume that each positive edge has a weight of $\bfw^+$ and each negative edge has a weight of $\bfw^-$.
Consider
a probabilistic model in which edge labels are assigned by a noisy classifier. Let
$\calC^*=(C^*_1,\dots C^*_T)$ be the ground truth clustering of the vertex set $V$.
The classifier labels each edge within a cluster with a $\PlusSign$ edge
with probability $p^+$ and as a $\MinusSign$ edge with probability $1-p^+$;
it labels each edge with endpoints in distinct clusters as a $\MinusSign$ edge
with probability $q^-$ and as a $\PlusSign$ edge with probability $1-q^-$. Thus,
 $(1-p^+)$ and $(1-q^-)$ are the classification error probabilities.
We assume that all classification errors are independent.

We note that
similar models have been previously studied by \cite{BBC04, EW2009, MW10,
ACX13, MMV15-CC} and others. However, the standard assumption in such models
was that the error probabilities, $(1-p^+)$ and $(1-q^-)$, are less than a
half; that is, $p^+ > \nicefrac{1}{2}$ and $q^->\nicefrac{1}{2}$. Here, we
investigate two cases (i) when $p^+ < \nicefrac{1}{2} < q^-$ and (ii) when
$q^- < \nicefrac{1}{2} < p^+$. We assume that $p^{+}+q^{-}>1$, which means
that the classifier is more likely to connect similar objects with a
$\PlusSign$ than dissimilar objects or, equivalently, that the classifier is
more likely to connect dissimilar objects with a $\MinusSign$ than similar
objects. For instance, consider a classifier that looks for evidence that the
objects are similar: if it finds some evidence, it adds a positive edge;
otherwise, it adds a negative edge (as described in our examples
\emph{dblp-2011} and \emph{ENWiki-2013} in the Introduction). Say, the
classifier detects a similarity between two objects in the same ground truth
cluster with a probability of only $30\%$ and incorrectly detects similarity
between two objects in different ground truth clusters with a probability of
$10\%$. Then, it will add a \emph{negative} edge between two similar objects
with probability $70\%$! While this scenario is not captured by the standard
assumption, it is captured by case (i) (here, $p^+ = 0.3 < \nicefrac{1}{2} <
q^- = 0.9$ and $p^+ + q^- > 1$).

Consider a clustering $\calC$ of the vertices.
Denote the sets of positive edges and negative edges with
both endpoints in the same cluster by  $\In^+(\calC)$ and $\In^-(\calC)$, respectively, and
the sets of positive edges and negative edges with endpoints in different clusters by
$\Out^+(\calC)$ and $\Out^-(\calC)$, respectively.
Then, the log-likelihood function of the clustering $\calC$ is,
\begin{align*}
\ell(G; \calC) &=
\log\Big(
{\prod_{(u,v)\in \text{\In}^+(\calC)}}p^+ \times
\smashoperator{\prod_{(u,v)\in \In^-(\calC)}} (1-p^+) 
\times\smashoperator{\prod_{(u,v)\in \Out^+(\calC)}}(1-q^-) \times
{\prod_{(u,v)\in \Out^-(\calC)}}q^-
\Big)\\
&= \log\Big((p^+)^{|\In^+(\calC)|}(1-p^+)^{|\In^-(\calC)|}
\cdot (1-q^-)^{|\Out^+(\calC)|} (q^-)^{|\Out^-(\calC)|}\Big)\\
&= |\In^+(\calC)| \log p^+ + |\In^-(\calC)| \log (1-p^+)
+ |\Out^+(\calC)| \log (1-q^-) + |\Out^-(\calC)|\log q^-\\
&= \underbrace{\Big(|E^+| \log p^+ + |E^-| \log q^-\Big)}_{\text{constant expression}}
-
\underbrace{\Big(|\Out^+(\calC)| \log\frac{p^+}{1-q^-} + |\In^-(\calC)| 
\log \frac{q^-}{1-p^+}\Big)}_{\text{MinDisagree objective}}.
\end{align*}
Let $\bfw^+ = \log\frac{p^+}{1-q^-}$ and
$\bfw^- = \log \frac{q^-}{1-p^+}$. Then, the negative term --
$\Big(|\Out^+(\calC)| \log\frac{p^+}{1-q^-} + |\In^-(\calC)| \log \frac{q^-}{1-p^+}\Big)$
-- equals $\bfw^+ |\Out^+(\calC)| + \bfw^- |\In^-(\calC)|$.
Note that
$|\Out^+(\calC)|$ is the number of positive edges disagreeing with $\calC$ and
$|\In^-(\calC)|$ is the number of negative edges disagreeing with $\calC$.

Now observe that the first term in the expression above -- $\Big(|E^+| \log p^+ + |E^-| \log q^-\Big)$ --
does not depend on $\calC$. It only depends on the instance $G=(V, E^+, E^-)$.
Thus,
maximizing the log-likelihood function over $\calC$ is equivalent to minimizing
the following objective
$$\bfw^+ ({\text{\# disagreeing \PlusSign edges}}) + \bfw^- ({\text{\# disagreeing \MinusSign edges}}).$$

Note that we have $\bfw^+ > \bfw^-$ when $p^+ < \nicefrac{1}{2} < q^-$ (case
(i) above); in this case, a $\PlusSign$ edge gives a stronger signal than a
$\MinusSign$ edge. Similarly, we have $\bfw^- > \bfw^+$ when $q^- <
\nicefrac{1}{2} < p^+$ (case (ii) above);  in this case, a $\MinusSign$ edge
gives a stronger signal than a $\PlusSign$ edge.



\section{Algorithm}

In this section, we present an approximation algorithm for Correlation
Clustering with Asymmetric Classification Errors.
The algorithm first solves a standard LP relaxation and assigns every edge a length of $x_{uv}$ (see Section~\ref{sec:LP}).
Then, one by one it creates new clusters and removes them from the graph. The algorithm creates a cluster $C$ as follows. It picks a random vertex $p$, called a pivot, among yet unassigned vertices and a random number $R\in [0,1]$. Then, it adds the pivot $p$ and all vertices $u$ with $f(x_{pu}) \leq R$ to $C$, where $f:[0,1]\to [0,1]$ is a properly chosen function, which we define below.
We give a pseudo-code for this algorithm in Algorithm~\ref{alg:ApprAlg}.

\begin{algorithm}[tb]
   \caption{Approximation Algorithm}
   \label{alg:ApprAlg}
\begin{algorithmic}
   \INPUT An instance of Correlation Clustering with Asymmetric Weights $G=(V,E^+,E^-, \bfw_e)$.
   \STATE Initialize $t=0$ and $V_t=V$.
   \WHILE{$V_t \neq \varnothing$}
   \STATE Pick a random pivot $p_t\in V_t$.
   \STATE Choose a radius $R$ uniformly at random in $[0,1]$.
   \STATE Create a new cluster $S_t$; add the pivot $p_t$ to $S_t$.
   \FORALL{$u\in V_t$}
   \IF{$f(x_{p_{t}u})\leq R$}
   \STATE Add $u$ to $S_t$.
   \ENDIF
   \ENDFOR
   \STATE Let $V_{t+1} = V_t\setminus S_t$ and $t = t+1$.
   \ENDWHILE
   \OUTPUT clustering $\calS=(S_0,\dots, S_{t-1})$.
\end{algorithmic}
\end{algorithm}

Our algorithm resembles the LP-based correlation clustering algorithms by
\citet{ACN08} and \citet{CMSY15}. However, a crucial difference between our
algorithm and above mentioned algorithms is that our algorithm uses a
``dependant'' rounding. That is, if for two edges $pv_1$ and $pv_2$, we have
$f(x_{pv_1})\leq R$ and $f(x_{pv_2})\leq R$ at some step $t$ of the algorithm
then both $v_1$ and $v_2$ are added to the new cluster $S_t$. The algorithms
by \citet{ACN08} and \citet{CMSY15} make decisions on whether to add $v_1$ to
$S_t$ and $v_2$ to $S_t$, independently. Also, the choice of the function $f$
is quite different from the functions used by~\citet{CMSY15}. In fact, it is
influenced by the paper by~\citet*{GVY96}.
\subsection{Linear Programming Relaxation}\label{sec:LP}
In this section, we describe a standard linear programming (LP) relaxation for Correlation
Clustering which was introduced by~\citet*{CGW03}. We first give an integer programming formulation
of the Correlation Clustering problem. For every pair of vertices $u$ and $v$, the integer program (IP)
has a variable $x_{uv}\in \{0,1\}$, which indicates whether $u$ and $v$ belong to the same cluster:
\begin{itemize}
  \item $x_{uv}=0$, if $u$ and $v$ belong to the same cluster; and
  \item $x_{uv}=1$, otherwise.
\end{itemize}
We require that $x_{uv}=x_{vu}$, $x_{uu}=0$
and all $x_{uv}$ satisfy the triangle inequality. That is, $x_{uv} + x_{vw}\geq x_{uw}$.

Every feasible IP solution $x$ defines a partitioning $\calS=(S_1,\dots,S_T)$ in which
two vertices $u$ and $v$ belong to the same cluster if and only if $x_{uv} = 0$.
A positive edge $uv$ is in disagreement with this partitioning if and only if $x_{uv} = 1$;
a negative edge $uv$ is in disagreement with this partitioning if and only if $x_{uv} = 0$.
Thus, the cost of the partitioning is given by the following linear function:
$$\sum_{uv\in E^+} \bfw_{uv} x_{uv} + \sum_{uv\in E^-} \bfw_{uv} (1 - x_{uv}).$$

We now replace all integrality constraints $x_{uv}\in \{0,1\}$ in the integer program
with linear constraints $x_{uv}\in [0,1]$ . The obtained linear program is given in Figure~\ref{fig:LP}.
In the paper, we refer to each variable $x_{uv}$ as the length of the edge $uv$.

\begin{figure}
\openLP

$$\min \sum_{uv\in E^+} \bfw_{uv} x_{uv} + \sum_{uv\in E^-} \bfw_{uv} (1 - x_{uv}).$$

\noindent\textbf{subject to}
\begin{align*}
x_{uw}&\leq x_{uv}+x_{vw}&\text{for all } u,v,w\in V\\
x_{uv}&=x_{vu}&\text{for all } u,v\in V\\
x_{uu}&=0&\text{for all } u\in V\\
x_{uv}&\in [0,1]&\text{for all } u,v\in V
\end{align*}
\closeLP
\caption{LP relaxation}\label{fig:LP}
\end{figure}

\section{Analysis of the Algorithm}
The analysis of our algorithm follows the general approach proposed by~\citet*{ACN08}.
\citet{ACN08} observed that in order to get upper bounds on the approximation
factors of their algorithms, it is sufficient to consider how these algorithms
behave on triplets of vertices. Below, we present their method adapted to our
settings. Then, we will use Theorem~\ref{thm:L4} to analyze our algorithm.

\subsection{General Approach: Triple-Based Analysis}
\label{triple_Analysis}
Consider an instance of Correlation Clustering  $G=(V,E^+,E^-)$ on three vertices $u$, $v$, $w$. Suppose that the edges $uv$, $vw$, and $uw$ have
signs $\sigma_{uv}, \sigma_{vw}, \sigma_{uw}\in \{\pm\}$, respectively. We shall call this instance a triangle $(u,v,w)$ and refer to
the vector of signs $\sigma =(\sigma_{vw}, \sigma_{uw}, \sigma_{uv})$ as the signature of the triangle~$(u,v,w)$.

Let us now assign arbitrary lengths $x_{uv}$, $x_{vw}$, and $x_{uw}$ satisfying the triangle inequality
to the edges $uv$, $vw$, and $uw$ and run one iteration of our algorithm on the triangle $uvw$
(see Algorithm~\ref{alg:alg-one-step}).

\begin{algorithm}[tb]
   \caption{One iteration of Algorithm~\ref{alg:ApprAlg} on triangle $uvw$}
   \label{alg:alg-one-step}
\begin{algorithmic}
   \STATE Pick a random pivot $p\in \{u,v,w\}$.
   \STATE Choose a random radius $R$ with the uniform distribution in $[0,1]$.
   \STATE Create a new cluster $S$. Insert $p$ in $S$.
   \FORALL{$a \in \{u,v,w\}\setminus\{p\}$}
   \IF{$f_{\alpha}(x_{pa})\leq R$}
   \STATE Add $a$ to $S$ .
   \ENDIF
   \ENDFOR
\end{algorithmic}
\end{algorithm}

We say that a positive edge $uv$ is in disagreement with $S$ if $u\in S$ and $v\notin S$ or $u\notin S$ and $v\in S$. Similarly,
a negative edge $uv$ is in disagreement with $S$ if $u,v\in S$. Let $cost(u,v\given w)$ be the probability that the
edge $(u,v)$ is in disagreement with $S$ given that $w$ is the pivot.
$$cost(u,v\given w) =
\begin{cases}
\pr(u\in S, v\notin S \text{ or } u\notin S, v\in S\given p = w),& \text{if } \sigma_{uv} = \PlusSign;\\
\pr(u\in S, v\in S\given p = w),& \text{if } \sigma_{uv} = \MinusSign.
\end{cases}$$

Let $lp(u,v\given w)$ be the LP contribution of the edge $(u,v)$ times the probability of it being removed, conditioned on $w$ being the pivot.
$$lp(u,v\given w) =
\begin{cases}
x_{uv}\cdot \pr(u\in S \text{ or } v\in S \given p = w),& \text{if } \sigma_{uv} = \PlusSign;\\
(1-x_{uv}) \cdot \pr(u\in S \text{ or } v\in S\given p = w),& \text{if } \sigma_{uv} = \MinusSign.
\end{cases}$$

We now define two functions $ALG^{\sigma}(x,y,z)$ and $LP^{\sigma}(x,y,z)$. To this end,
construct a triangle $(u,v,w)$ with signature $\sigma$ edge lengths $x,y,z$ (where
$x_{vw} = x$, $x_{uw} = y$, $x_{uv} = z$).
Then,
\begin{align*}
ALG^{\sigma}(x,y,z) &= \bfw_{uv}\cdot cost(u,v\given w) + \bfw_{uw}\cdot cost(u,w\given v) + \bfw_{vw}\cdot cost(v,w\given u);\\
LP^{\sigma}(x,y,z) &= \bfw_{uv}\cdot lp(u,v\given w) + \bfw_{uw}\cdot lp(u,w\given v) + \bfw_{vw}\cdot lp(v,w\given u).
\end{align*}

We will use the following theorem from the paper by \citet*{CMSY15} (Lemma~4) to analyze our algorithm.
This theorem was first proved by~\citet*{ACN08} but it was not stated in this form in their paper.
\begin{theorem}[see \cite{ACN08} and \cite{CMSY15}]\label{thm:L4}
Consider a function $f_{\alpha}$ with $f_{\alpha}(0) = 0$. If for all signatures $\sigma=(\sigma_1,\sigma_2,\sigma_3)$
(where each $\sigma_i\in \{\pm\}$) and edge lengths $x$, $y$, and $z$ satisfying the triangle inequality,
we have $ALG^{\sigma}(x,y,z)\leq \rho LP^{\sigma}(x,y,z)$, then the approximation factor of the algorithm is at most $\rho$.
\end{theorem}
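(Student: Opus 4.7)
The strategy is a standard charging argument: decompose both $E[\mathrm{ALG}]$ and the LP objective into weighted sums indexed by unordered vertex triples, where the contribution of a given triple to each side matches exactly $ALG^\sigma$ and $LP^\sigma$ respectively, with identical nonnegative weights. The pointwise hypothesis $ALG^\sigma \le \rho\, LP^\sigma$ then transfers to $E[\mathrm{ALG}] \le \rho \cdot \mathrm{LP}^\ast$ by termwise multiplication, where $\mathrm{LP}^\ast$ is the optimal LP value.

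To set up the decomposition, I would fix an edge $e=(u,v)$ and observe that whether $e$ ends up in disagreement is resolved in the first iteration $t^\ast$ in which at least one of $u,v$ is placed in a cluster; what happens at that step depends only on the pivot $p_{t^\ast}$ and its distances to $u, v$. Using that, conditional on $V_t$, the pivot $p_t$ is uniform over $V_t$, I would expand
\[
\pr(e \text{ in disagreement}) \;=\; \sum_{t}\sum_{p \in V} E\!\left[\frac{\mathbf{1}(u,v,p \in V_t)}{|V_t|}\right] \cdot cost(u,v\given p),
\]
and, invoking that $e$ is eventually resolved with probability $1$, obtain an analogous decomposition of the LP contribution of $e$ using $lp(u,v\given p)$. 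The crucial observation is that the coefficient $W^t_{\{u,v,w\}} := E[\mathbf{1}(u,v,w\in V_t)/|V_t|]$ depends only on the \emph{unordered} set $\{u,v,w\}$. Thus, when I swap the order of summation to sum over unordered triples $\{u,v,w\}$ and then over the three ``edge plus third vertex'' assignments within the triple, the three edge contributions fuse into exactly $ALG^\sigma(x_{vw}, x_{uw}, x_{uv})$ on the algorithm side and into $LP^\sigma(x_{vw}, x_{uw}, x_{uv})$ on the LP side, each scaled by the common nonnegative weight $\sum_t W^t_{\{u,v,w\}}$. Since the triple hypothesis gives the inequality termwise, summing concludes the argument.

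The main obstacle is the ``boundary'' terms where the pivot coincides with an endpoint of the edge ($p \in \{u,v\}$), which do not directly fit a triple pattern. I would absorb them into the triangle framework by viewing them as a degenerate triangle in which one vertex coalesces with the pivot; the assumption $f_\alpha(0) = 0$ ensures that, in any configuration with a zero-length leg, the ``extra'' vertex is always placed in the pivot's cluster, so that the behavior at such a degenerate triangle exactly matches the endpoint-is-pivot case. Evaluating the triangle hypothesis at this degenerate configuration (for instance $x_{uw} = 0$ and $x_{vw} = x_{uv}$, which trivially satisfies the triangle inequality) then supplies the needed single-edge bound, closing the argument.
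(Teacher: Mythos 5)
Your proof is correct and follows essentially the same charging strategy as the paper: both decompose $\mathbb{E}[\mathrm{ALG}]$ and $LP$ into contributions indexed by (step, triple), exploit the symmetry of the coefficient $\mathbb{E}[\mathbf{1}(u,v,w\in V_t)/|V_t|]$ in the unordered triple to fuse the three edge terms into $ALG^\sigma$ and $LP^\sigma$, and absorb the pivot-at-endpoint boundary terms via degenerate triangles using $f_\alpha(0)=0$. The paper packages this slightly differently---it proves $\mathbb{E}[ALG_t\mid V_t]\le \rho\,\mathbb{E}[LP_t\mid V_t]$ at each step (adding virtual positive self-loops with $cost(u,u\given w)=lp(u,u\given w)=0$ to make the ordered-triple grouping clean) rather than aggregating over steps first---but the underlying argument is the same.
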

\subsection{Analysis of the Approximation Algorithm}
\begin{proof}[Proof of Theorem~\ref{thm:main}] Without loss of generality we
assume that the scaling parameter $\bfw$ is $1$. We use different functions for 
$\alpha \leq 0.169$ and $\alpha \geq 0.169$. Let $A = 3 + 2\log_e 1/\alpha$. 
For $\alpha \leq 0.169$, we define
$f_{\alpha}(x)$ as follows (see Figure~\ref{fig:plot-f}):
$$
    f_{\alpha}(x)= \left\{
\begin{array}{ll}
      1-e^{-Ax}, & \text{if }0\leq x<\frac{1}2-\frac{1}{2A}; \\
      1, & \text{otherwise};
\end{array}
\right.
$$
and, for $\alpha \geq 0.169$, we define $f_{\alpha}(x)$ as follows:
\begin{align*}
    f_{\alpha}(x)= \left\{
\begin{array}{ll}
      0, &\mbox{ if } x<\frac{1}{A} \\
      \frac{1-\alpha}{3}, &\mbox{ if } \frac{1}{A}\leq x<\frac{1}{2}-\frac{1}{2A} \\
      1, &\mbox{ if } x\geq \frac{1}{2}-\frac{1}{2A} \\
\end{array}
\right.
\end{align*}

\begin{figure}[t]
\centering
\begin{tikzpicture}
\begin{axis}[
   xlabel=$x$,
   ylabel=$f_{\alpha}(x)$,
   grid=both,
   minor grid style={gray!25},
   major grid style={gray!25},
   width=\linewidth,
    legend pos=south east,
    xtick distance=0.25,
    minor tick num=1,
    legend cell align={left}
]
\addplot[domain=0:0.4702,line width=0.5pt,solid,color=black] {1 - exp(-16.815*x) };
\addlegendentry{\tiny $f_{\alpha}$ for $\alpha = 0.001$};
\addplot[line width=0.5pt,dashed,color=black] %
   table[x=x,y=f,col sep=comma]{data/optimal-f-plot-0.001.csv};
\addlegendentry{\tiny $f_{opt}$ for $\alpha = 0.001$};
\addplot[domain=0:0.459,line width=0.5pt,solid,color=red] {1 - exp(-12.21*x) };
\addlegendentry{\tiny $f_{\alpha}$ for $\alpha = 0.01$};
\addplot[line width=0.5pt,dashed,color=red] %
   table[x=x,y=f,col sep=comma]{data/optimal-f-plot-0.01.csv};
\addlegendentry{\tiny $f_{opt}$ for $\alpha = 0.1$};
\addplot[domain=0:0.434,line width=0.5pt,solid,color=blue] {1 - exp(-7.605*x) };
\addlegendentry{\tiny $f$ for $\alpha = 0.1$};
\addplot[line width=0.5pt,dashed,color=blue] %
   table[x=x,y=f,col sep=comma]{data/optimal-f-plot-0.1.csv};
\addlegendentry{\tiny $f_{opt}$ for $\alpha = 0.1$};

\addplot[domain=0.4702:1,line width=0.6pt,color=black,dotted,forget plot] {1};
\addplot[domain=0.459:1,line width=0.6pt,color=orange,dotted,forget plot] {1};
\addplot[domain=0.434:1,line width=0.6pt,color=red,dotted,forget plot] {1};
\end{axis}
\node [color=black, rotate=85] at (0.55,2) {\tiny $0.001$};
\node [color=red, rotate=57] at (1.344,4.46) {\tiny $0.01$};
\node [color=blue, rotate=73] at (1.087, 3.23) {\tiny $0.1$};
\node [color=black, rotate=82] at (0.75,1) {\tiny $0.001$};
\node [color=red, rotate=67] at (1.7,3.46) {\tiny $0.01$};
\node [color=blue, rotate=75] at (1.55, 2.2) {\tiny $0.1$};
\end{tikzpicture}
\caption{This plot shows functions $f_{\alpha}(x)$ used in the proof of Theorem~\ref{thm:main} for $\alpha \in\{0.001, 0.01, 0.1\}$. Additionally, it shows optimal functions $f_{opt}(x)$
(see Section~\ref{sec:optimal} for details).
Note that every function $f_{\alpha}(x)$, including $f_{opt}(x)$, has a discontinuity at point $\tau = \nicefrac{1}{2} - \nicefrac{1}{2A}$; for $x \geq \tau$, $f_{\alpha}(x) = 1$.}
\label{fig:plot-f}
\end{figure}
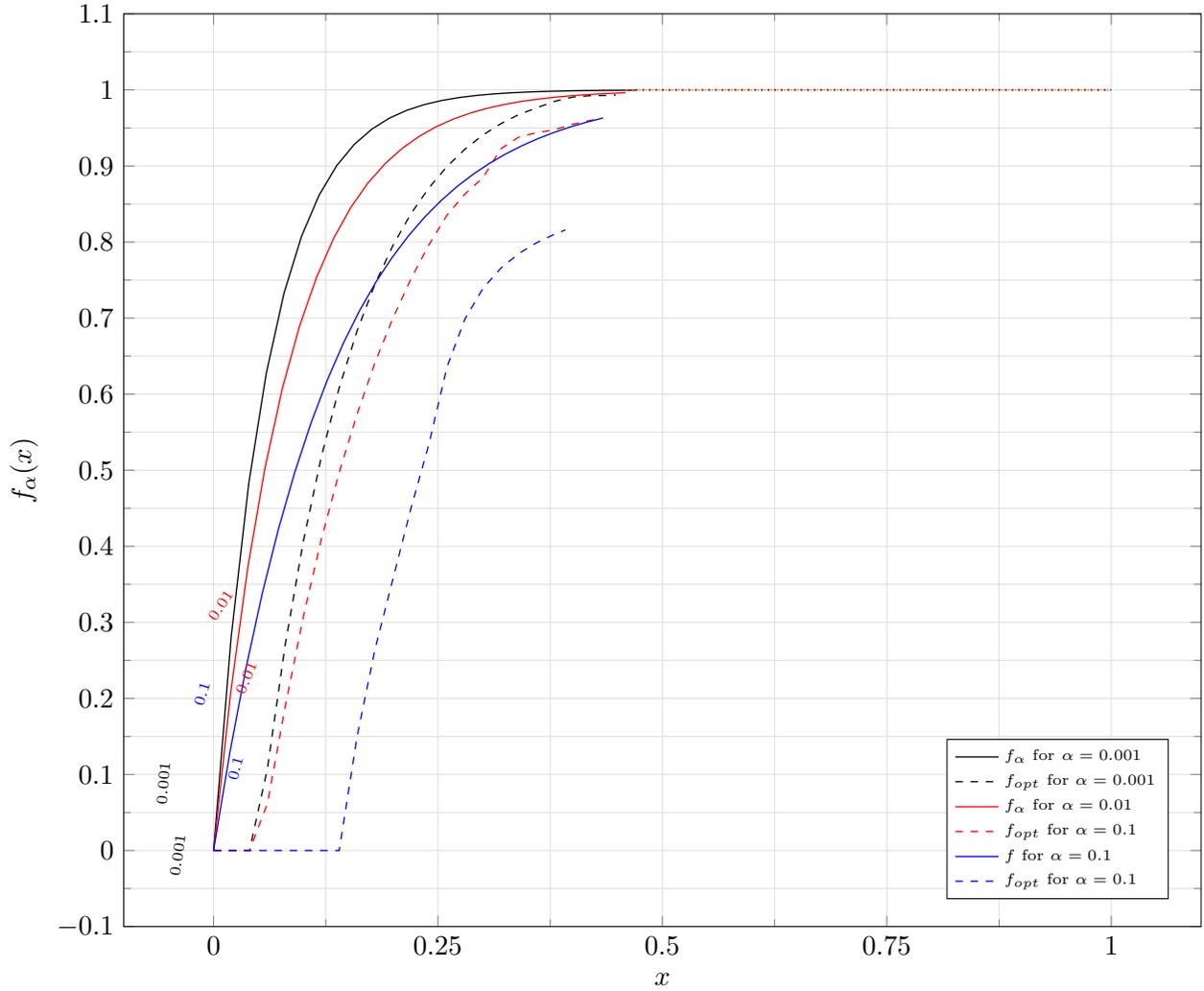

Our analysis of the algorithm relies on Theorem~\ref{thm:L4}. We will show that for every triangle
$(u_1,u_2,u_3)$ with edge lengths $(x_1,x_2,x_3)$ (satisfying the triangle inequality) and signature $\sigma = (\sigma_1,\sigma_2, \sigma_3)$,
we have
\begin{equation}\label{eq:analysis-main}
ALG^{\sigma}(x_1,x_2,x_3)\leq A\cdot LP^{\sigma}(x_1,x_2,x_3).
\end{equation}
Therefore, by Theorem~\ref{thm:L4}, our algorithm gives an $A$-approximation.

Without loss of generality, we assume that $x_1\leq x_2\leq x_3$. When $i\in\{1,2,3\}$ is fixed, we will denote the other two elements of $\{1,2,3\}$ by $k$ and $j$, so that $j < k$.
For $i\in\{1,2,3\}$, let $e_i = (u_j,u_k)$ (the edge opposite to $u_i$), $w_i = \bfw_{e_i}$, $x_i = x_{u_ju_k}$, $y_i = f_{\alpha}(x_i)$, and 
$$t_i = A\cdot lp(u_j, u_k | u_i)-cost(u_j, u_k | u_i).$$ 

Observe that (\ref{eq:analysis-main}) is equivalent to 
the inequality $w_1t_1 + w_2 t_2 + w_3 t_3 \geq 0$. We now prove that this inequality 
always holds.
\begin{lemma}\label{lem:wr-geq-0} We have
\begin{equation}\label{eq:analysis-main-alt}
w_1t_1 + w_2 t_2 + w_3 t_3 \geq 0
\end{equation}
\end{lemma}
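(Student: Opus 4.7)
The starting point is a structural observation about the weights. Positive edges have $w_e \in [\alpha,1]$, but negative edges are only bounded below by $\alpha$ and may be arbitrarily large. Since \eqref{eq:analysis-main-alt} must hold for \emph{every} admissible weight assignment, letting $w_i \to \infty$ on a negative edge forces
\begin{equation*}
t_i \ge 0 \quad\text{whenever } \sigma_i = \MinusSign.
\end{equation*}
Thus negative-edge terms must be handled individually; positive-edge terms need not be, and will be balanced against one another using $\alpha \le w_i \le 1$.

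The next step is to write $cost(u_j,u_k \given u_i)$ and $lp(u_j,u_k \given u_i)$ in closed form. Conditioned on pivot $u_i$, a vertex $u_a$ with $a\neq i$ enters $S$ iff $R \ge f_{\alpha}(x_{u_iu_a})$. Using the notation that $\{j,k\}=\{1,2,3\}\setminus\{i\}$ with $j<k$, and monotonicity of $f_{\alpha}$ together with $y_1\le y_2\le y_3$, a direct calculation gives
\begin{align*}
\sigma_i=\PlusSign:&\quad t_i = A\,x_i(1-y_j)-(y_k-y_j),\\
\sigma_i=\MinusSign:&\quad t_i = A(1-x_i)(1-y_j)-(1-y_k).
\end{align*}

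I would then prove $t_i\ge 0$ for every negative edge from the formula above. The inequality splits by the value of $y_k$. If $y_k=1$ (equivalently, $x_k\ge\tau:=\tfrac12-\tfrac1{2A}$), the right-hand side vanishes and the claim is trivial. If $y_k<1$, the inequality $A(1-x_i)(1-y_j)\ge 1-y_k$ reduces to a one-variable check on the piece of $f_{\alpha}$ relevant to $x_k$ (the exponential $1-e^{-Ax}$ for $\alpha\le 0.169$, the two-step function for $\alpha\ge 0.169$). The value $A=3+2\log_e(1/\alpha)$ together with the placement of the breakpoints $1/A$ and $\tau$ is tuned so that each such one-variable inequality holds, with the triangle inequality $x_k \le x_i + x_j$ supplying the crucial bound on $x_k$.

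Finally, having $t_i \ge 0$ for every negative edge, for each signature I would combine the three terms. Since $w_i t_i \ge \alpha t_i$ for $t_i\ge 0$ and the expression is linear in each $w_i\in[\alpha,1]$ on the positive edges, the minimum over admissible weights is attained at a vertex of $\{\alpha,1\}^{|\{i:\sigma_i=\PlusSign\}|}$, and the problem reduces to a finite list of inequalities on the domain $\{0\le x_1\le x_2\le x_3\le 1,\ x_3\le x_1+x_2\}$. The main obstacle will be the bookkeeping: the eight signatures, two definitions of $f_{\alpha}$, and the three breakpoint regions for each $x_i$ generate many sub-cases. I expect the tightest ones to be the all-positive signature $(\PlusSign,\PlusSign,\PlusSign)$ (where the exponential form of $f_{\alpha}$ is specifically chosen so that $A\cdot LP^{\sigma} - ALG^{\sigma}$ vanishes to first order at the extremal configuration) and the mixed signature with two negative edges and one positive edge of small weight $w_i=\alpha$, where essentially all of the slack comes from the per-edge bounds $t_i\ge 0$ established in the previous step.
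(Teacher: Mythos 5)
Your high-level skeleton matches the paper's actual proof: express each $t_i$ in closed form, show that negative edges always satisfy $t_i\ge 0$, and then reduce the verification of $w_1t_1+w_2t_2+w_3t_3\ge 0$ to a case analysis. The observation that admitting $w_i\to\infty$ on negative edges \emph{forces} $t_i\ge 0$ is a nice way to motivate the structure, and your derivation of the formulas for $t_i$ and the plan to prove $t_i\ge 0$ for $\sigma_i=\MinusSign$ via the triangle inequality are sound. However, there is a genuine gap: the proposal stops at the point where essentially all of the work begins. Saying that the exponential $1-e^{-Ax}$ is ``specifically chosen so that $A\cdot LP^\sigma - ALG^\sigma$ vanishes to first order'' is not a proof, and fixing weights at vertices of the box does not reduce the problem to a ``finite list of inequalities'' — each weight vertex still leaves a continuous inequality over the domain $\{x_1\le x_2\le x_3\le x_1+x_2\}$, and establishing those inequalities is exactly what consumes the constants $A=3+2\log_e(1/\alpha)$, the breakpoints $1/A$ and $\tfrac12-\tfrac1{2A}$, and the case split at $\alpha=0.169$. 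None of that appears in the plan.

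Two specific claims in your final paragraph are also wrong. First, for a positive edge with $t_i<0$, the adversarial weight is $w_i=1$, not $w_i=\alpha$ (since $w_i t_i$ is minimized by maximizing $w_i$ when $t_i<0$). Second, the signature with two negative edges is not tight: by the negative-edge bound you already established, two of the three $t_i$'s are non-negative in that case, which makes the inequality \emph{easier}. The paper's key reduction — which your plan omits — is that flipping a \emph{short} edge ($x_i<\tfrac12-\tfrac1{2A}$) from negative to positive can only decrease $t_i$, so the binding signatures have the two shorter edges positive (with $\sigma_3\in\{\PlusSign,\MinusSign\}$ and $e_1$ at weight $1$ carrying $t_1<0$). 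Without this monotonicity-in-sign observation, and without the explicit estimates that exploit the tuned form of $f_\alpha$, the plan does not close the proof.
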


We express each $t_i$ in terms of $x_i$'s and $y_i$'s.
\begin{claim}\label{cl:what_t_i_looks_like} For every $i\in\{1,2,3\}$, we have
$$
    t_i =
\begin{cases}
    A(1 - y_j) x_i-(y_k - y_j), &  \text{if } \sigma_i = \PlusSign\\
    A(1-y_j)(1-x_i)-(1 - y_k) , & \text{if } \sigma_i = \MinusSign
\end{cases}
$$
\end{claim}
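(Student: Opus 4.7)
The plan is to directly unfold the definitions of $cost(u_j,u_k\given u_i)$ and $lp(u_j,u_k\given u_i)$ from Section~\ref{triple_Analysis}, by tracing through the behavior of Algorithm~\ref{alg:alg-one-step} on the triangle. The first step, which is the only one that requires any care, is to nail down the indexing. Fix $i\in\{1,2,3\}$ and let $j<k$ be the other two indices. Since $e_j$ is the edge \emph{opposite} $u_j$, it connects $u_i$ to $u_k$ and thus has length $x_j$; similarly $(u_i,u_j)=e_k$ has length $x_k$. Therefore, conditional on $u_i$ being the pivot, Algorithm~\ref{alg:alg-one-step} adds $u_j$ to $S$ iff $f_\alpha(x_k)\le R$, i.e., iff $R\ge y_k$, and adds $u_k$ to $S$ iff $R\ge y_j$.

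Next I would use the assumption $x_1\le x_2\le x_3$ together with the fact that $f_\alpha$ is non-decreasing in both branches of the definition given in the statement of Theorem~\ref{thm:main}, which gives $y_j\le y_k$. Since $R$ is uniform on $[0,1]$, this immediately yields
\begin{align*}
\Pr(u_j\in S\text{ or }u_k\in S\mid p=u_i)&=\Pr(R\ge y_j)=1-y_j,\\
\Pr(u_j\in S\text{ and }u_k\in S\mid p=u_i)&=\Pr(R\ge y_k)=1-y_k,\\
\Pr(\text{exactly one of }u_j,u_k\text{ in }S\mid p=u_i)&=y_k-y_j.
\end{align*}

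To finish, I would split on the sign $\sigma_i$. If $\sigma_i=\PlusSign$, the definitions give $cost(u_j,u_k\given u_i)=y_k-y_j$ (disagreement means exactly one endpoint is in $S$) and $lp(u_j,u_k\given u_i)=x_i(1-y_j)$, so
$$t_i \;=\; A\,x_i(1-y_j)-(y_k-y_j).$$
If $\sigma_i=\MinusSign$, disagreement means \emph{both} endpoints are in $S$, so $cost(u_j,u_k\given u_i)=1-y_k$, while $lp(u_j,u_k\given u_i)=(1-x_i)(1-y_j)$, giving
$$t_i \;=\; A(1-y_j)(1-x_i)-(1-y_k).$$
These are exactly the two cases stated in the claim.

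There is no real obstacle here: the only genuinely error-prone step is the index bookkeeping in the first paragraph (remembering that the pivot-to-$u_j$ edge is $e_k$, of length $x_k$, not $e_j$), and once the convention is correct the rest is a one-line probability computation relying on nothing more than the monotonicity of $f_\alpha$ and the uniformity of $R$.
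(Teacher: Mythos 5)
Your proof is correct and follows essentially the same route as the paper's: unfold the definitions of $cost(\cdot\mid\cdot)$ and $lp(\cdot\mid\cdot)$, note that conditional on $u_i$ being the pivot, $u_j\in S$ iff $y_k\le R$ and $u_k\in S$ iff $y_j\le R$, and then use $y_j\le y_k$ (from $x_j\le x_k$ and the monotonicity of $f_\alpha$) together with the uniformity of $R$ to read off the three probabilities $1-y_j$, $1-y_k$, and $y_k-y_j$. The explicit care you take with the ``opposite-edge'' indexing in the first paragraph is precisely the only subtle point, and you handle it correctly.
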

\begin{proof}
If $\sigma_i = \PlusSign$, then
\begin{align*}
t_i &= A\cdot lp(u_j, u_k | u_i)-cost(u_j, u_k | u_i)\\
&=A x_{u_ju_k}\cdot \pr(u_j\in S \text{ or } u_k\in S \given p = u_i] -\pr(u_j\in S, u_k\notin S \text{ or } u_j\notin S, u_k\in S \given p = u_i]\\
&= A x_i \cdot \pr(f_{\alpha}(x_k) \leq R \text{ or } f_{\alpha}(x_j) \leq R ) - \pr(f_{\alpha}(x_k) \leq R < f_{\alpha}(x_j) \text{ or } f_{\alpha}(x_j) \leq R < f_{\alpha}(x_k) )\\
&= A x_i (1 - y_j) - (y_k - y_j),
\end{align*}
where we used that $y_k = f_{\alpha}(x_k) \geq f_{\alpha}(x_j) = y_j$ (since $x_k \geq x_j$ and $f_{\alpha}(x)$ is non-decreasing).

If $\sigma_i = \MinusSign$, then similarly to the previous case, we have
\begin{align*}
t_i &= A\cdot lp(u_j, u_k | u_i)-cost(u_j, u_k | u_i)\\
&= A (1-x_{u_ju_k}) \cdot \pr(u_j\in S \text{ or } u_k\in S \given p = u_i)- \pr(u_j\in S, u_k\in S\given p = w) \\
&= A (1-x_i) \cdot \pr(f_{\alpha}(x_k)\leq R \text{ or } f_{\alpha}(x_j) \leq R) -
 \pr(f_{\alpha}(x_k) \leq R, f_{\alpha}(x_j)\leq R) \\
&= A (1-x_i) \cdot (1 - y_j) - (1 - y_k).
\end{align*}
\end{proof}

We say that edge $e_i$ \textit{pays for itself} if $t_i \geq 0$. Note
that if all edges $e_1, e_2, e_3$ pay for themselves then the desired inequality
$(\ref{eq:analysis-main-alt})$ holds. First, we show that all negative edges pay
for themselves.

\begin{claim}\label{cl:negative_pays}
If $\sigma_i = \MinusSign$, then $t_i\geq 0$.
\end{claim}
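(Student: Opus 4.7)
The plan is to reduce the claim to a single elementary inequality that only uses two structural properties of $f_{\alpha}$: monotonicity, and the fact that $f_{\alpha}(x)=1$ for $x\geq \tau:=\tfrac{1}{2}-\tfrac{1}{2A}$. Both definitions of $f_{\alpha}$ (for $\alpha\leq 0.169$ and $\alpha\geq 0.169$) satisfy these two properties, so this would let me avoid splitting into the two algorithmic regimes.

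The first step is to apply Claim~\ref{cl:what_t_i_looks_like}, which gives
$$t_i = A(1-y_j)(1-x_i) - (1-y_k).$$
Since $x_j\leq x_k$ and $f_{\alpha}$ is non-decreasing, $y_j\leq y_k$, so $1-y_j \geq 1-y_k$. Factoring gives the cleaner bound
$$t_i \;\geq\; (1-y_k)\bigl[A(1-x_i)-1\bigr].$$
Now I would split two easy cases. If $y_k=1$, the right-hand side is $0$ and we are done immediately. Otherwise $y_k<1$, which by the definition of $f_{\alpha}$ forces $x_k<\tau = \tfrac{1}{2}-\tfrac{1}{2A}$.

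In the remaining case I need to show $A(1-x_i)\geq 1$, i.e., $x_i\leq 1-\tfrac{1}{A}=2\tau$. This is where the indexing convention and the triangle inequality do the work. For $i\in\{1,2\}$ we have $k=3$, so $x_i\leq x_3 = x_k<\tau<2\tau$. For $i=3$ we have $j=1$, $k=2$, and the triangle inequality on the LP gives $x_i=x_3\leq x_1+x_2\leq 2x_k<2\tau=1-\tfrac{1}{A}$. In either case $A(1-x_i)>1$, hence $t_i>0$.

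The only place that needs care is making sure that the triangle inequality bound $x_i \leq x_j+x_k$ is strong enough in the $i=3$ sub-case, since there $x_i$ is the largest edge length and could in principle be larger than $\tau$; the gain comes from the fact that $x_j\leq x_k$, so $x_j+x_k\leq 2x_k$, and the threshold $\tau$ was chosen precisely so that $2\tau = 1-\tfrac{1}{A}$. That choice of $\tau$ is what makes the whole argument click, and I expect no further obstacle beyond noting this matching.
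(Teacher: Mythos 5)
Your proof is correct and follows essentially the same route as the paper's: both start from Claim~\ref{cl:what_t_i_looks_like}, use $y_j\leq y_k$ to replace $1-y_j$ by $1-y_k$, dispose of the case $y_k=1$ trivially, and otherwise use $x_k<\tfrac12-\tfrac1{2A}$ together with $x_j\leq x_k$ and the triangle inequality to get $A(1-x_i)>1$. The only cosmetic difference is that you split $i\in\{1,2\}$ from $i=3$, whereas the paper applies the chain $A>\tfrac{1}{1-2x_k}\geq\tfrac{1}{1-x_j-x_k}\geq\tfrac{1}{1-x_i}$ uniformly for all $i$.
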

\begin{proof}
By Claim~\ref{cl:what_t_i_looks_like}, $t_i = A(1-y_j)(1-x_i)- 1-y_k$.
Thus, we need to show that $A(1-y_j)(1-x_i)\geq 1-y_k$. 
If $x_k\geq \frac{1}{2} - \frac{1}{2A}$ then $y_k = 1$, and the inequality trivially holds. 
If $x_k< \frac12 - \frac{1}{2A}$, then using $x_j \leq x_k$, we get
$$A >\frac{1}{1-2x_k} \geq \frac{1}{1-x_k - x_j} \geq
\frac{1}{1-x_i},$$
here we used the triangle inequality $x_k + x_j \geq x_i$. Thus
$$
   A(1-y_j)(1-x_i)\geq A(1-y_k)(1-x_i)\geq 1-y_k.
$$
\end{proof}

We now show that for short edges $e_i$, it is sufficient to consider only the case 
when $\sigma_i = \PlusSign$. Specifically, we prove the following claim.
\begin{claim}\label{positive_is_enough}
Suppose that $x_i<\frac{1}{2}-\frac{1}{2A}$. If (\ref{eq:analysis-main-alt}) holds for $\sigma$ with $\sigma_i = \PlusSign$, then (\ref{eq:analysis-main-alt}) also holds for $\sigma'$ obtained from $\sigma$ by changing the sign of $\sigma_i$ to \MinusSign.
\end{claim}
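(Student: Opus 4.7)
The plan is to show that flipping the sign $\sigma_i$ from $\PlusSign$ to $\MinusSign$ only makes the left-hand side of (\ref{eq:analysis-main-alt}) larger, in the sense that every valid weight assignment for $\sigma'$ is dominated (in the sum $w_1 t_1 + w_2 t_2 + w_3 t_3$) by a valid weight assignment for $\sigma$ with the same edge lengths. Let $t_i^+$ and $t_i^-$ denote the two values of $t_i$ prescribed by Claim~\ref{cl:what_t_i_looks_like} under the two sign choices of $\sigma_i$. Since $t_j$ and $t_k$ depend only on $\sigma_j$ and $\sigma_k$, they are unchanged when we flip $\sigma_i$; only $t_i$ changes and the admissible range of $w_i$ enlarges from $[\alpha, 1]$ to $[\alpha, \infty)$.

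The heart of the argument is the pointwise inequality $t_i^- \geq t_i^+$. Subtracting the two expressions in Claim~\ref{cl:what_t_i_looks_like} gives
\begin{equation*}
t_i^- - t_i^+ \;=\; A(1-y_j)(1-2x_i) + 2y_k - y_j - 1.
\end{equation*}
The hypothesis $x_i < \frac{1}{2} - \frac{1}{2A}$ yields $A(1-2x_i) > 1$, and since $1 - y_j \geq 0$, we get $A(1-y_j)(1-2x_i) \geq 1 - y_j$. Substituting into the identity above gives $t_i^- - t_i^+ \geq 2(y_k - y_j) \geq 0$, where the last inequality uses monotonicity of $f_\alpha$ together with $x_j \leq x_k$.

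With this in hand, I would fix any weights $(w_1, w_2, w_3)$ valid for $\sigma'$ and apply the hypothesis for $\sigma$ at the weights $\tilde w_i = \alpha$ (which lies in $[\alpha, 1]$, hence is valid for the positive edge $e_i$ under $\sigma$) and $\tilde w_j = w_j$, $\tilde w_k = w_k$ (valid since the signs of $e_j$ and $e_k$ agree in $\sigma$ and $\sigma'$). This produces $\alpha t_i^+ + w_j t_j + w_k t_k \geq 0$. Combining Claim~\ref{cl:negative_pays} (which gives $t_i^- \geq 0$) with $w_i \geq \alpha$ and the inequality $t_i^- \geq t_i^+$ yields $w_i t_i^- \geq \alpha t_i^- \geq \alpha t_i^+$. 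Adding this to the $\sigma$-inequality, and noting that the $t_j, t_k$ contributions are identical in $\sigma$ and $\sigma'$, gives $w_i t_i^- + w_j t_j + w_k t_k \geq 0$, which is precisely (\ref{eq:analysis-main-alt}) for $\sigma'$.

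The only non-routine step is the algebraic verification that $t_i^- \geq t_i^+$, and the hypothesis $x_i < \frac{1}{2} - \frac{1}{2A}$ is exactly what forces $A(1-2x_i) > 1$; without this hypothesis the key estimate $A(1-y_j)(1-2x_i) \geq 1 - y_j$ fails and the whole domination argument breaks down.
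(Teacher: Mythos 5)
Your proof is correct and takes essentially the same approach as the paper: both hinge on showing $t_i^- \geq t_i^+$ from the hypothesis $x_i < \tfrac12 - \tfrac{1}{2A}$ together with $y_j \leq y_k$. You go one step further than the paper's write-up, which stops after establishing $t_i^- > t_i^+$: by invoking Claim~\ref{cl:negative_pays} and passing through $\tilde w_i = \alpha$, you explicitly account for the fact that the admissible range of $w_i$ enlarges from $[\alpha,1]$ under $\sigma$ to $[\alpha,\infty)$ under $\sigma'$, a detail the paper leaves implicit but which is indeed needed to complete the deduction.
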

\begin{proof}
To prove the claim, we show that the value of $t_i$ is greater for $\sigma'$ than for $\sigma$. That is, 
$$A(1-y_j)x_i -(y_k-y_j)< A(1-y_j)(1-x_i)-(1-y_k).$$
Note that the values of $t_j$ and $t_k$ do not depend on $\sigma_i$ and 
thus do not change if we replace $\sigma$ with $\sigma'$.
Since $f_{\alpha}$ is non-decreasing and $x_j\leq x_k$, we have $y_j\leq y_k$.
Hence,
$$
x_i< \frac{1}{2}-\frac{1}{2A}=\frac{1}{2}+\frac{1}{2A}-\frac{1}{A}\leq \frac{1}{2}+\frac{1}{2A}-\frac{(1-y_k)}{A(1-y_j)}.
$$
Thus, 
$$ 2A(1-y_j)x_i<A(1-y_j)+1-y_j-2(1-y_k).$$
Therefore, $$A(1-y_j)x_i -(y_k-y_j)< A(1-y_j)(1-x_i)-(1-y_k),$$ as required.
\end{proof}

Unlike negative edges, positive edges do not necessarily pay for themselves. 
We now prove that positive edges of length at least $1/A$ pay for themselves.
\begin{claim}\label{cl:positive_pays}
If $\sigma_i = \PlusSign$ and $x_i \geq 1/A$, then $t_i\geq 0$.
\end{claim}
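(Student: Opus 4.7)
\medskip

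\noindent\textbf{Proof proposal for Claim~\ref{cl:positive_pays}.}

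By Claim~\ref{cl:what_t_i_looks_like}, when $\sigma_i = \PlusSign$ we have
$$t_i = A(1-y_j)x_i - (y_k - y_j),$$
so it suffices to establish
$$A(1-y_j)\, x_i \;\geq\; y_k - y_j. \qquad (\star)$$
The plan is a direct case analysis driven by the two definitions of $f_\alpha$ and by where $x_j \leq x_k$ fall relative to the breakpoints of $f_\alpha$. The triangle inequality enters through $x_k - x_j \leq x_i$ (a consequence of $x_k \leq x_i + x_j$), and the monotonicity of $f_\alpha$ gives $y_j \leq y_k$.

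\smallskip

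\noindent\emph{Case 1: $\alpha \leq 0.169$, so $f_\alpha(x) = 1-e^{-Ax}$ on $[0,\tau)$ and $f_\alpha(x)=1$ on $[\tau,1]$, where $\tau = \tfrac{1}{2} - \tfrac{1}{2A}$.} I would split on the position of $x_j, x_k$:

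\smallskip

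\noindent\emph{(1a)} If $x_j \geq \tau$, then $y_j = y_k = 1$ and both sides of $(\star)$ vanish.

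\smallskip

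\noindent\emph{(1b)} If $x_j < \tau \leq x_k$, then $y_k = 1$ and $y_j = 1 - e^{-Ax_j}$, so $(\star)$ becomes $A x_i e^{-Ax_j} \geq e^{-Ax_j}$, i.e.\ $Ax_i \geq 1$, which is exactly the hypothesis $x_i \geq 1/A$.

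\smallskip

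\noindent\emph{(1c)} If $x_k < \tau$, then $1 - y_j = e^{-Ax_j}$ and $y_k - y_j = e^{-Ax_j}(1 - e^{-A(x_k - x_j)})$. Dividing by $e^{-Ax_j}$, $(\star)$ reduces to $A x_i \geq 1 - e^{-A(x_k-x_j)}$. By the triangle inequality $x_k - x_j \leq x_i$, and by the elementary inequality $1 - e^{-t} \leq t$ for $t \geq 0$,
$$1 - e^{-A(x_k-x_j)} \;\leq\; A(x_k - x_j) \;\leq\; A x_i,$$
which proves $(\star)$. (Note that in this subcase the hypothesis $x_i \geq 1/A$ is not actually used.)

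\smallskip

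\noindent\emph{Case 2: $\alpha \geq 0.169$, so $f_\alpha$ is a three-step function with jumps at $1/A$ and $\tau$.} I would enumerate the five possibilities for $(x_j, x_k)$ relative to the two breakpoints. When $x_k < 1/A$ both $y$'s are $0$ and $(\star)$ reduces to $A x_i \geq 0$. When $x_j \geq \tau$ both $y$'s equal $1$ and $(\star)$ is trivial. In the remaining three mixed cases, $y_k - y_j$ is one of $\tfrac{1-\alpha}{3}$, $1$, or $\tfrac{2+\alpha}{3}$, while $1 - y_j$ is either $1$ or $\tfrac{2+\alpha}{3}$; in each of these cases the bound $A x_i \geq 1$ (from $x_i \geq 1/A$) is enough to conclude $(\star)$, since $1 - y_j \geq y_k - y_j$ always (as $y_k \leq 1$).

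\smallskip

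The main thing to be careful about is the case analysis itself: there is no single slick argument covering all regions, but each region reduces either to the elementary $1-e^{-t} \leq t$ inequality (combined with the triangle inequality) or to the hypothesis $A x_i \geq 1$. The definition of $A = 3 + 2\log_e(1/\alpha)$ is not used in this claim; it will matter only when combining the $t_i$'s through positive edges that do \emph{not} pay for themselves (namely, those with $x_i < 1/A$).
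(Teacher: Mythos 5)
Your proposal is correct, but it is substantially more elaborate than necessary, and you came within a hair of the slick argument. The paper proves the claim in one line: since $Ax_i \geq 1$ and $1-y_j \geq 0$,
$$
t_i = A(1-y_j)x_i - (y_k - y_j) \;\geq\; (1-y_j) - (y_k - y_j) \;=\; 1 - y_k \;\geq\; 0,
$$
where the last step uses only $y_k = f_\alpha(x_k) \leq 1$. This argument is purely formal: it needs no information about the shape of $f_\alpha$, no breakpoints, and no triangle inequality. You actually state the key fact yourself at the very end of Case~2 (``$1-y_j \geq y_k - y_j$ always, as $y_k \leq 1$''), but you invoke it only as a way to close out a few subcases of the step-function regime rather than recognizing that it makes the entire case decomposition redundant. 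Your subcase~1c, which drops the hypothesis $x_i \geq 1/A$ in favor of the triangle inequality and $1-e^{-t}\leq t$, is essentially re-proving Claim~\ref{cl:all_pay} inside this claim; that is fine logically, but it signals that you were reasoning region by region from the definition of $f_\alpha$ rather than looking for the invariant that holds everywhere. The takeaway is that the claim is really a two-inequality observation about any nondecreasing $f_\alpha$ bounded by $1$, and the detailed structure of $f_\alpha$ only matters in the other claims.
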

\begin{proof}
We have,
$$t_i = A(1 - y_j) x_i-(y_k - y_j) \geq (1 - y_j)-(y_k - y_j) = 1 - y_k \geq 0.$$
\end{proof}

We now separately consider two cases $\alpha \leq 0.169$ and $\alpha \geq 0.169$.
\let\qed\relax 
\end{proof}

\subsection{Analysis of the Approximation Algorithm for \texorpdfstring{$\alpha \leq 0.169$}{α ≤ 0.169}}
First, we consider the case of $\alpha \leq 0.169$. 
\begin{proof}[Proof of Lemma~\ref{lem:wr-geq-0} for $\alpha \leq 0.169$]
We first show that if $x_3<\frac{1}{2}-\frac{1}{2A}$, then all three edges $e_1$, $e_2$, and $e_3$ 
pay for themselves.
\begin{claim}\label{cl:all_pay} If $x_3<\frac{1}{2}-\frac{1}{2A}$, then 
$t_i \geq 0$ for every $i$.
\end{claim}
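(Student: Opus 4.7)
The plan is to dispatch the three edges $e_1, e_2, e_3$ one at a time, using the signs $\sigma_i$ and the size of $x_i$ to decide which previously proved fact to invoke. First, by Claim~\ref{cl:negative_pays}, any $e_i$ with $\sigma_i = \MinusSign$ already satisfies $t_i \geq 0$ with no assumptions on the $x$'s, so I only need to worry about positive edges. Second, by Claim~\ref{cl:positive_pays}, any positive $e_i$ with $x_i \geq 1/A$ also satisfies $t_i \geq 0$. So the remaining case is a positive edge $e_i$ whose length is strictly less than $1/A$.

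For this remaining case I would use the hypothesis $x_3 < \tfrac12 - \tfrac{1}{2A}$ decisively: since $x_1 \leq x_2 \leq x_3$ are all below $\tau = \tfrac12 - \tfrac{1}{2A}$, every $y_\ell$ is given by the explicit formula $y_\ell = 1 - e^{-A x_\ell}$. Plugging $1 - y_j = e^{-Ax_j}$ and $y_k - y_j = e^{-Ax_j} - e^{-Ax_k}$ into the $\sigma_i = \PlusSign$ formula from Claim~\ref{cl:what_t_i_looks_like} yields, after factoring out $e^{-Ax_j}$,
\[
 t_i = e^{-Ax_j}\bigl(Ax_i - 1 + e^{-A(x_k - x_j)}\bigr).
\]
So it suffices to verify the scalar inequality $Ax_i + e^{-A(x_k - x_j)} \geq 1$.

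This is where the two key structural inputs meet. The triangle inequality on the LP edge lengths gives $x_k \leq x_i + x_j$, i.e.\ $x_k - x_j \leq x_i$; combined with the elementary bound $e^{-t} \geq 1 - t$ applied at $t = A(x_k - x_j) \geq 0$, I obtain
\[
 e^{-A(x_k - x_j)} \;\geq\; 1 - A(x_k - x_j) \;\geq\; 1 - Ax_i,
\]
which rearranges to exactly the inequality I need. Hence $t_i \geq 0$ for positive short edges as well, completing the case analysis.

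I expect no real obstacle here: the calculation is routine once the exponential form of $f_\alpha$ is available on all three edges (which is precisely what the hypothesis $x_3 < \tfrac12 - \tfrac{1}{2A}$ guarantees). The only thing to double-check is the direction of the triangle inequality: we need the \emph{reverse} triangle inequality $x_k - x_j \leq x_i$ (the longer of the two non-$i$ sides minus the shorter is bounded by the $i$-th side), which follows from $x_k \leq x_i + x_j$. With that in hand, the inequalities chain together cleanly and no case-by-case analysis of $\sigma_i$ beyond the $\PlusSign/\MinusSign$ split above is required.
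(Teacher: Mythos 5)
Your proof is correct and takes essentially the same route as the paper: both reduce to the $\sigma_i = \PlusSign$ case via Claim~\ref{cl:negative_pays}, and both then combine the triangle inequality $x_k - x_j \leq x_i$ with the elementary bound $1 - e^{-t} \leq t$ (which you phrase equivalently as $e^{-t} \geq 1-t$) to conclude $t_i \geq 0$, using that the hypothesis $x_3 < \tfrac12 - \tfrac{1}{2A}$ makes $y_\ell = 1 - e^{-Ax_\ell}$ on all three edges. The only cosmetic difference is your preliminary invocation of Claim~\ref{cl:positive_pays} to restrict to $x_i < 1/A$, which is harmless but unnecessary since your final calculation never uses that bound.
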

\begin{proof}
Since $x_3 < \frac{1}{2}-\frac{1}{2A}$, for every $i\in\{1,2,3\}$ we have $x_i
< \frac{1}{2}-\frac{1}{2A}$ and thus $y_i \equiv f_{\alpha}(x_i) = 1- e^{-Ax_i}$.
We show that $t_i \geq 0$ for all $i$. Fix $i$. If $\sigma_i = \MinusSign$,
then, by Claim~\ref{cl:negative_pays}, $t_i \geq 0$. If $\sigma_i = \PlusSign$, then
\begin{multline*}
y_k-y_j= e^{-Ax_j}-e^{-Ax_k}=e^{-Ax_j}\left(1-e^{-A\left(x_k-x_j\right)}\right)
\leq \\
\leq 
e^{-Ax_j}A(x_k-x_j)\leq e^{-Ax_j}Ax_i =A(1-y_j)x_i,
\end{multline*}
where the first inequality follows from the inequality $1-e^{-x}\leq x$, and
the second inequality follows from the triangle inequality. Thus, $t_i =
A(1 - y_j) x_i-(y_k - y_j) \geq 0$.
\end{proof}
We conclude that if $x_3< \frac{1}{2} -\frac{1}{2A}$, then
(\ref{eq:analysis-main-alt}) holds. The case $x_3< \frac{1}{2} -\frac{1}{2A}$ is the most interesting 
case in the analysis; the rest of the proof is more technical. As a side note, 
let us point out that Theorem~\ref{thm:main} has dependence $A = 3 + 2\log_e 1/\alpha$ 
because (i) $f_{\alpha}(x)$ must be equal to $C - e^{-Ax}$ or a slower growing function so that
Claim \ref{cl:all_pay} holds (ii) Theorem~\ref{thm:L4} requires that $f_{\alpha}(0) = 0$,
and finally (iii) we will need below that
$1-f\left(\frac{1}{2}-\frac{3}{2A}\right) \leq \alpha$.

\medskip

From now on, we assume that $x_3\geq  \frac12 -\frac{1}{2A}$ and, consequently,
$y_3 = f(x_3)=1$. Observe that if $x_1\geq \frac{1}{A}$, then all $x_i \geq \frac{1}{A}$ and
thus, by Claims~\ref{cl:negative_pays} and~\ref{cl:positive_pays}, all $t_i \geq 0$
and we are done. Similarly, if $x_2\geq \frac{1}{2}-\frac{1}{2A}$, then 
$x_2 \geq \frac{1}{A}$ (since $A \geq 3$). Hence,  $t_2\geq 0$ and $t_3 \geq 0$;
additionally, $y_2=y_3=1$. Thus $t_1 = 0$ and inequality~(\ref{eq:analysis-main-alt}) holds. 
Therefore, it remains to show that inequality~(\ref{eq:analysis-main-alt}) holds
when
$$x_1 < \frac1A,\quad  x_2 < \frac{1}{2}-\frac{1}{2A},
\text{ and } x_3 \geq \frac{1}{2}-\frac{1}{2A}.$$
By Claim \ref{positive_is_enough}, we
may also assume that $\sigma_1 = \PlusSign$ and $\sigma_2 = \PlusSign$.
Since $\alpha \leq 0.169$, we have $A > 5$ and
$$x_2 \geq x_3 - x_1 \geq \bigg(\frac12 - \frac{1}{2A}\bigg) - \frac1A >
\frac1A\text{ and }x_3 \geq \frac{1}{2}-\frac{1}{2A} > \frac1A.$$
Thus, by Claims~\ref{cl:negative_pays} and~\ref{cl:positive_pays}, $t_2 \geq 0$ and $t_3
\geq 0$. Hence, $w_2 t_2 + w_3 t_3 \geq \alpha(w_2+w_3)$.
Also, recall that $e_1$ is a positive edge and thus $w_1 \leq 1$.
Therefore, it is sufficient to show that
\begin{equation}\label{eq:analysis-main-alt-reweighted}
t_1 \geq -\alpha(t_2 + t_3).
\end{equation}

Now we separately consider two possible signatures $\sigma = (\PlusSign,\PlusSign,\PlusSign)$
and $\sigma = (\PlusSign,\PlusSign,\MinusSign)$.

\medskip

\noindent\textbf{First, assume that $\sigma =
(\PlusSign,\PlusSign,\PlusSign)$.} We need to show that
$$
A(1 - y_2) x_1 - (1-y_2)\geq \alpha \bigg( 
(1 - y_1) + (y_2 - y_1)  - A(1 - y_1) x_2 - A(1 - y_1) x_3
\bigg).
$$
Here, we used that $y_3=1$.
Note that $x_2\geq x_3-x_1\geq \frac{1}{2}-\frac{1}{2A}-\frac{1}{A}=\frac{1}{2}-\frac{3}{2A}$. Therefore,
\begin{align*}
1-y_2\leq 1-\left(1-e^{-A\left(\frac{1}2-\frac{3}{2A}\right)}\right)&=e^{-\frac{3}{2}-\log_e\frac{1}{\alpha}+\frac{3}{2}}=e^{-\log_e\frac{1}{\alpha}}=\alpha.
\end{align*}
Thus, $(1-y_2)+\alpha(1 - y_1) + \alpha(y_2 - y_1)\leq \alpha y_2 +2\alpha(1-y_1)$. To finish the analysis of the case $\sigma = (\PlusSign,\PlusSign,\PlusSign)$, it
is sufficient to show that
\begin{align*}
    \alpha y_2 +2\alpha(1-y_1)\leq& A(1 - y_2) x_1+\alpha A(1 - y_1) x_2+ \alpha A(1 - y_1) x_3.
\end{align*}
This inequality immediately follows from the following claim (we simply need to add up (\ref{refined_x_2}) and (\ref{refined_x_3}) and multiply the result by $\alpha$).
\begin{claim}
For $c =0.224$, we have
\begin{align}
    (2-c)(1-y_1) &\leq  A(1-y_1) x_2 \label{refined_x_2};\text{ and}\\
     y_2 + c(1-y_1)&\leq A(1-y_1) x_3 \label{refined_x_3}.
\end{align}
\end{claim}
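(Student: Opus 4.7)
The plan is to reduce each of the two inequalities to a statement purely in the edge lengths by dividing through by $1 - y_1 = e^{-A x_1}$, which is strictly positive because $x_1 < 1/A < 1/2 - 1/(2A)$, so $y_1$ is given by the exponential branch of $f_{\alpha}$. Once this reduction is done, each inequality becomes an extremal question about a simple function of $x_1, x_2, x_3$ under the constraints $x_1 < 1/A$, $x_2 < 1/2 - 1/(2A)$, $x_3 \geq 1/2 - 1/(2A)$, and the triangle inequality.

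For inequality (\ref{refined_x_2}), the reduction gives $2 - c \leq A x_2$. I would then combine the triangle inequality $x_2 \geq x_3 - x_1$ with $x_3 \geq 1/2 - 1/(2A)$ and $x_1 < 1/A$ to get $x_2 \geq 1/2 - 3/(2A)$, hence $A x_2 \geq (A-3)/2 = \log_e(1/\alpha)$. Since $\alpha \leq 0.169$ yields $\log_e(1/\alpha) \geq 1.778 > 1.776 = 2 - c$, the desired inequality follows.

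For inequality (\ref{refined_x_3}), I would substitute $y_1 = 1 - e^{-A x_1}$ and $y_2 = 1 - e^{-A x_2}$ and divide by $e^{-A x_1}$ to obtain
\[
(1 - e^{-A x_2})\, e^{A x_1} + c \;\leq\; A x_3.
\]
The left-hand side is strictly increasing in both $x_1$ and $x_2$, so its supremum over the admissible region is attained (in the limit) at $x_1 = 1/A$ and $x_2 = 1/2 - 1/(2A)$, yielding $(1 - e^{-(A-1)/2})\, e + c = e - \alpha + c$, where I used $e^{-(A-1)/2} = \alpha/e$ (from $A = 3 + 2 \log_e(1/\alpha)$). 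The right-hand side is minimized at $x_3 = 1/2 - 1/(2A)$, giving $(A-1)/2 = 1 + \log_e(1/\alpha)$. Thus it suffices to verify the numerical inequality $e + c - 1 \leq \alpha + \log_e(1/\alpha)$; since $\alpha + \log_e(1/\alpha)$ is strictly decreasing in $\alpha$ on $(0,1)$, and plugging in $\alpha = 0.169$ and $c = 0.224$ gives $1.942 \leq 1.947$, the inequality holds for all $\alpha \leq 0.169$.

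The main technical step is the monotonicity/extremal analysis of the left-hand side of (\ref{refined_x_3}); once the inequality is reduced to a numerical check, the role of the threshold $\alpha \leq 0.169$ and the specific constant $c = 0.224$ becomes transparent — both are tuned so that the two inequalities are simultaneously (nearly) tight at $\alpha = 0.169$, with the constraint from (\ref{refined_x_2}) requiring $c \geq 2 - \log_e(1/\alpha)$ and the constraint from (\ref{refined_x_3}) requiring $c \leq 1 - e + \alpha + \log_e(1/\alpha)$.
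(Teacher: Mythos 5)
Your proof is correct and follows essentially the same route as the paper: for (\ref{refined_x_2}) you both reduce to $Ax_2 \geq \tfrac{A-3}{2} = \log_e(1/\alpha) \geq 2-c$ via the triangle inequality, and for (\ref{refined_x_3}) you both use the bounds $e^{Ax_1} < e$, $y_2 \leq 1-\alpha/e$, and $Ax_3 \geq \tfrac{A-1}{2}$, arriving at the same numerical check $e + c - 1 \leq \alpha + \log_e(1/\alpha)$ at $\alpha = 0.169$. Your sup/inf monotonicity framing is a clean repackaging of the paper's chain of inequalities, and your closing observation that $c$ must satisfy $2 - \log_e(1/\alpha) \leq c \leq 1 - e + \alpha + \log_e(1/\alpha)$ nicely explains the choice of constants.
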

\begin{proof}
Since $c\geq 2-\log_e\frac{1}{0.169} \geq 2-\log_e\frac{1}{\alpha}$ (recall that $\alpha \leq 0.169$), we have
$$2-c\leq \log_e\frac{1}{\alpha}= \frac{A}{2}-\frac{3}{2}\leq Ax_2.$$
Therefore, (\ref{refined_x_2}) holds.
We also have, 
$$c\leq 0.169 + \log_e \frac{1}{0.169}+1-e \leq \alpha+\log_e\frac{1}{\alpha}+1-e.$$
Thus,
$e-\alpha\leq\frac{A}{2}-\frac{1}{2}-c \leq Ax_3 - c$.
Therefore,
\begin{align}\label{intermediate}
    e^{-1}\left(Ax_3-c\right)&\geq 1-\alpha e^{-1}= 1-e^{-A\left(\frac{1}2-\frac{1}{2A}\right)}\geq y_2,
\end{align}
where we used that $x_2< \frac{1}{2}-\frac{1}{2A}$ and $y_2 = f_{\alpha}(x_2)  = 1 - e^{-Ax_2}$. 
Observe that from inequalities (\ref{intermediate}) and $x_1< \frac{1}{A}$ it follows that
$$y_2\leq \left(1-f\Big(\frac{1}{A}\Big)\right)(Ax_3-c)\leq (1-y_1)(Ax_3-c),$$
which implies (\ref{refined_x_3}).
\end{proof}

\medskip

\noindent\textbf{Now, assume that $\sigma = (\PlusSign,\PlusSign,\MinusSign)$.} We need to prove the following inequality,
\begin{equation}\label{e_3_negative}
(1-y_2)+\alpha (1 - y_1+1-y_2)\leq A(1 - y_2) x_1+\alpha A(1 - y_1) (x_2+1-x_3).
\end{equation}
As before,
\begin{equation}\label{e_3_negative_1}
(1-y_2)+\alpha (1 - y_1+1-y_2)\leq \alpha + \alpha (1 - y_1+1-y_2)\leq \alpha +2\alpha(1 - y_1).
\end{equation}
On the other hand,
\begin{align}\label{e_3_negative_2}
    A(1 -  y_2) x_1+\alpha A(1 - y_1)(x_2+1-x_3)&\geq \alpha A (1 - y_1) (1-x_1+x_1+x_2-x_3)\nonumber\\
    &\geq \alpha A (1 - y_1) (1-x_1)\nonumber\\
    &\geq \alpha A (1 - y_1) \left(1-\frac{1}{A}\right)\nonumber\\
    &=\alpha (1 - y_1)(A-1)
\end{align}
where the second inequality is due to the triangle inequality, and the third inequality is due to $x_1<\frac{1}{A}$.
Finally, observe that $1\leq 2e^{-1}\log_e\frac{1}{\alpha}=e^{-1}(A-3)\leq (1-y_1)(A-3)$.
We get,
\begin{equation}\label{e_3_negative_3}
    \alpha (1 - y_1)(A-1)\geq \alpha +2\alpha(1 - y_1).
\end{equation}
Combining~(\ref{e_3_negative_1}), (\ref{e_3_negative_2}), and (\ref{e_3_negative_3}), we get~(\ref{e_3_negative}).
This concludes the case analysis and the proof of Theorem~\ref{thm:main} for the regime $\alpha\leq 0.169$. 
\let\qed\relax 
\end{proof}

\subsection{Analysis of the Approximation Algorithm for \texorpdfstring{$\alpha\geq 0.169$}{α ≥ 0.169}}
We now consider the case when $\alpha \geq 0.169.$ Observe that for $\alpha \geq 0.169$
\begin{equation}\label{large_alpha_ineq1}
A=3+2\log_e(1/\alpha) \geq \frac{6\alpha +3-(1-\alpha)^2}{3\alpha}
\end{equation}
and
\begin{equation}\label{large_alpha_ineq2}
\frac{1-\alpha}{3}\leq \frac{2\alpha}{1+\alpha}
\end{equation}
\begin{proof}[Proof of Lemma~\ref{lem:wr-geq-0} for $\alpha \geq 0.169$]
Observe that if $x_1\geq \frac{1}{A}$, then all $x_i \geq 1/A$ and thus, by Claims~\ref{cl:negative_pays} 
and~\ref{cl:positive_pays}, all $t_i \geq 0$ and we are done. Moreover, if $x_3 < \frac{1}{A}$ then
all $x_i < 1/A$ implying $y_i=0$ and thus, $t_i\geq 0$ for $\sigma_i=``+"$. 
This combined with Claim~\ref{cl:negative_pays} imply all $t_i\geq 0$ and we are done. 
Similarly, if $x_2\geq \frac{1}{2}-\frac{1}{2A}$, then $x_2\geq 1/A$ (since $A \geq 3$). 
Hence, $t_2\geq 0$ and $t_3 \geq 0$; additionally, we have $y_2=y_3=1$. 
Thus, $t_1 = 0$ and we are done.

Therefore, we will assume below that
\begin{align*}
x_1 < \frac{1}{A},\;\;x_2 < \frac{1}2-\frac{1}{2A},\;\; x_3\geq \frac{1}{A}.
\end{align*}

Furthermore, by Claim \ref{positive_is_enough}, we may assume $\sigma_1 = \PlusSign$ and $\sigma_2 = \PlusSign$. We consider four cases: (i) $x_2\geq \nicefrac{1}{A},\;x_3\geq \nicefrac{1}{2}-\nicefrac{1}{(2A)}$, (ii) $x_2< \nicefrac{1}{A},\;x_3\geq \nicefrac{1}{2}-\nicefrac{1}{(2A)}$, (iii) $x_2\geq \nicefrac{1}{A},\;x_3< \nicefrac{1}{2}-\nicefrac{1}{(2A)}$, and (iv) $x_2< \nicefrac{1}{A},\;x_3< \nicefrac{1}{2}-\nicefrac{1}{(2A)}$.

\medskip

\noindent\textbf{Consider the case $x_2\geq \frac{1}{A},\;x_3\geq \frac{1}{2}-\frac{1}{2A}.$} Then $y_1=0,\;y_2=\nicefrac{(1-\alpha)}{3},\;y_3=1.$ By Claims~\ref{cl:negative_pays} and~\ref{cl:positive_pays}, $t_2,t_3\geq 0$, and $e_2,e_3$ pay for themselves. If $t_1\geq 0$, we are done. 
So we will assume below that $t_1<0$. Then,
\begin{equation}\label{weightAssumption1}
w_1t_1+w_2t_2+w_3t_3\geq 1\cdot t_1+\alpha t_2+\alpha t_3
\end{equation}
(recall that we assume that $e_1$ is a positive edge and thus $w_1\leq 1$).

Now we separately consider two possible signatures $\sigma = (\PlusSign,\PlusSign,\PlusSign)$ and
$\sigma = (\PlusSign,\PlusSign,\MinusSign)$.

\medskip

\noindent\textbf{First, assume that $\sigma = (\PlusSign,\PlusSign,\PlusSign)$.} Because of~(\ref{weightAssumption1}),
to prove~(\ref{eq:analysis-main-alt}) it is sufficient to show
\begin{equation}\label{+++}
    (1-y_2)+\alpha +\alpha y_2
    \leq A(1-y_2)x_1+\alpha Ax_2 +\alpha Ax_3
\end{equation}

From~(\ref{large_alpha_ineq1}) it follows that
$$
    1+\alpha\leq\frac{(1-\alpha)^2}{3}+\alpha (A-1)
$$

which implies~(\ref{helper1}) due to $x_3\geq \frac{1}{2}-\frac{1}{2A}$
\begin{equation}\label{helper1}
    1+\alpha\leq \frac{(1-\alpha)^2}{3}+2\alpha Ax_3
\end{equation}

Observe that from~(\ref{helper1}) together with triangle inequality and $y_2=\frac{1-\alpha}{3}\leq 1-\alpha$ it follows that
\begin{equation*}
 1+\alpha\leq (1-\alpha)y_2+A(1-y_2)x_1-\alpha Ax_1
 +\alpha Ax_1+\alpha Ax_2+\alpha Ax_3
\end{equation*}
which is equivalent to~(\ref{+++}).

\medskip

\noindent\textbf{Now, assume that $\sigma = (\PlusSign,\PlusSign,\MinusSign)$.} Because of~(\ref{weightAssumption1}), to prove~(\ref{eq:analysis-main-alt}) it is sufficient to show
\begin{equation}\label{++-}
    (1-y_2)+\alpha+\alpha (1-y_2)\leq A(1-y_2)x_1+\alpha Ax_2+\alpha A(1-x_3)
\end{equation}
From~(\ref{large_alpha_ineq1}) and $y_2=\frac{1-\alpha}{3}$ it follows that

\begin{equation*}
1+2\alpha\leq\frac{(1-\alpha)^2}{3}+\alpha A \leq y_2(1 + \alpha) + \alpha A
\end{equation*}
Since $y_2 \leq 1-\alpha$, 
\begin{equation*}
(1 + 2\alpha) \leq (1+\alpha)y_2+A(1-y_2)x_1-\alpha Ax_1 +\alpha A,
\end{equation*}
Hence, using the triangle inequality,
\begin{equation*}
    1+2\alpha\leq (1+\alpha)y_2+A(1-y_2)x_1-\alpha Ax_1+\alpha A
    +\alpha Ax_1+\alpha Ax_2-\alpha Ax_3.
\end{equation*}
which is equivalent to~(\ref{++-}).
\medskip

\noindent\textbf{Consider the case $x_2< \frac{1}{A},\;x_3\geq \frac{1}{2}-\frac{1}{2A}.$} Then $y_1=y_2=0,\;y_3=1.$ Observe that $t_3\geq 0$ and $t_1,t_2<0$. Then,
\begin{equation}\label{weightAssumption2}
w_1t_1+w_2t_2+w_3t_3 \geq 1\cdot t_1+1\cdot t_2+\alpha t_3.
\end{equation}
(recall that we assume that $e_1,e_2$ are positive edges and thus $w_1,w_2\leq 1$).
Furthermore, since $x_3\geq \frac{1}{2}-\frac{1}{2A}$ we have
\begin{equation}\label{helper3}
    Ax_3\geq A(1-x_3)-1.
\end{equation}

From~(\ref{helper3}), we get that if (\ref{eq:analysis-main-alt}) holds for $\sigma$ with $\sigma_3 = \MinusSign$, then (\ref{eq:analysis-main-alt}) also holds for $\sigma'$ obtained from $\sigma$ by changing the sign of $\sigma_3$ to $\PlusSign.$ Thus without loss of generality $\sigma_3=\MinusSign$ and we only need to consider $\sigma = (\PlusSign,\PlusSign,\MinusSign).$ Then, because of~(\ref{weightAssumption2}), to prove~(\ref{eq:analysis-main-alt}) it is sufficient to show
\begin{equation}\label{++-2}
    1+1+\alpha \leq Ax_1+Ax_2+\alpha A(1-x_3).
\end{equation}
From~(\ref{large_alpha_ineq1}) it follows that
$$
    A\geq\frac{5+\alpha}{\alpha+1}
$$
which is equivalent to
\begin{equation}\label{helper4}
2+\alpha\leq \alpha A +(1-\alpha)(\frac{A}{2}-\frac{1}{2}).
\end{equation}
Observe that from~(\ref{helper4}) together with triangle inequality and $x_3\geq \frac{1}{2}-\frac{1}{2A}$ it follows that
\begin{equation*}
  2+\alpha \leq \alpha A+(1-\alpha)Ax_3= Ax_3+\alpha A(1-x_3)
  \leq Ax_1+Ax_2+\alpha A(1-x_3).
\end{equation*}

\medskip

\noindent\textbf{Consider the case $x_2\geq \frac{1}{A},\;x_3< \frac{1}{2}-\frac{1}{2A}$.} 
Then $y_1=0,\;y_3=\nicefrac{(1-\alpha)}{3}$. By Claim~\ref{positive_is_enough} we only need
to consider $\sigma=(\PlusSign,\PlusSign,\PlusSign).$ Then
by Claim~\ref{cl:positive_pays}, $t_2,t_3\geq 0$. Thus, if $t_1\geq 0$ then 
$w_1t_1+w_2t_2+w_3t_3\geq 0$. Let us assume that $t_1<0$. 
Since $e_1$ is a positive edge, we have $w_1\leq 1$. Thus,
\begin{equation*}
w_1 t_1+w_2 t_2+w_3 t_3\geq 1\cdot t_1+\alpha t_2+\alpha t_3
\end{equation*}
We need to show that the right hand side in the above inequality is non-negative. Replace $t_1$, $t_2$, and $t_3$ with the expressions from Claim~\ref{cl:what_t_i_looks_like}. Now to obtain~(\ref{eq:analysis-main-alt}), it is sufficient to prove that
\begin{equation}\label{+++2}
    y_3-y_2+\alpha y_3+\alpha y_2
    \leq A(1-y_2)x_1+\alpha Ax_2 +\alpha Ax_3
\end{equation}
Observe that since $x_3\geq \frac{1}{A}$ we have
\begin{equation}\label{helper5}
    2\alpha\leq (1-\alpha)y_2+2\alpha Ax_3.
\end{equation}
Inequalities~(\ref{helper5}) and~(\ref{large_alpha_ineq2}) imply
\begin{equation}\label{helper6}
    (1+\alpha)y_3\leq (1-\alpha)y_2+2\alpha Ax_3.
\end{equation}
Observe that from~(\ref{helper6}) together with triangle inequality and $ y_2\leq 1-\alpha$ it follows that
\begin{equation*}
    (1+\alpha)y_3\leq (1-\alpha)y_2+A(1-y_2)x_1-\alpha Ax_1
    +\alpha Ax_1+\alpha Ax_2+\alpha Ax_3
\end{equation*}
which is equivalent to~(\ref{+++2}).

\medskip

\noindent\textbf{Consider the case $x_2< \frac{1}{A},\;x_3< \frac{1}{2}-\frac{1}{2A}.$} Then $y_1=y_2=0.$ By Claim~\ref{positive_is_enough} we only need to consider $\sigma=(\PlusSign,\PlusSign,\PlusSign).$ Then by Claim~\ref{cl:positive_pays}, $t_3\geq 0$.

If $x_1\geq \nicefrac{y_3}{A}$ then $t_1,t_2\geq 0$ and we are done. Thus we assume $x_1< \nicefrac{y_3}{A}$ which implies $t_1<0$. We consider two 
different regimes: (i) $x_2\geq \nicefrac{y_3}{A}$ and (ii) $x_2<\nicefrac{y_3}{A}.$

\medskip

\noindent\textbf{First, assume that $x_2\geq \nicefrac{y_3}{A}$} which implies $t_2\geq 0.$ Then,
\begin{equation}\label{weightAssumption4}
w_1t_1+w_2t_2+w_3t_3\geq 1\cdot t_1+\alpha t_2+\alpha t_3
\end{equation}
(recall that we assume that $e_1$ is a positive edge and thus $w_1\leq 1$).

Because of~(\ref{weightAssumption4}), to prove~(\ref{eq:analysis-main-alt}) it is sufficient to show
\begin{equation}\label{x_2_greater_than_y_3/A}
    y_3+\alpha y_3\leq Ax_1+\alpha Ax_2+\alpha Ax_3
\end{equation}

Observe that by~(\ref{large_alpha_ineq2}) and $y_3=\nicefrac{(1-\alpha)}{3}$ we have
\begin{equation*}
    (1+\alpha)y_3\leq 2\alpha\leq 2\alpha Ax_3\leq \alpha Ax_3 +\alpha Ax_1+\alpha Ax_2
    \leq Ax_1+\alpha Ax_2+\alpha Ax_3
\end{equation*}
where the second inequality follows from $x_3\geq \frac{1}{A}$ and the third inequality follows from triangle inequality.

\medskip

\noindent\textbf{Now, assume that $x_2<\nicefrac{y_3}{A}$} which implies $t_2< 0.$ Then,
\begin{equation}\label{weightAssumption5}
w_1t_1+w_2t_2+w_3t_3\geq 1\cdot t_1+1\cdot t_2+\alpha t_3
\end{equation}
(recall that we assume that $e_1,e_2$ are positive edges and thus $w_1,w_2\leq 1$).

Because of~(\ref{weightAssumption5}), to prove~(\ref{eq:analysis-main-alt}) it is sufficient to show
\begin{equation}\label{x_2_less_than_y_3/A}
    2y_3\leq Ax_1+Ax_2+\alpha Ax_3
\end{equation}
Observe that by~(\ref{large_alpha_ineq2}) and $x_3\geq \frac{1}{A}$
\begin{equation*}
    2y_3\leq \frac{4\alpha}{1+\alpha}\leq 1+\alpha\leq(1+\alpha)Ax_3\leq Ax_1+Ax_2+\alpha Ax_3
\end{equation*}
where the last inequality follows from triangle inequality.

This concludes the case analysis and the proof of Theorem~\ref{thm:main} for the regime $\alpha\geq 0.169$.
\end{proof}

\section{Better approximation for values of \texorpdfstring{$\alpha$}{\unichar{"25B}} appearing in practice}\label{sec:optimal}
We note that the choice of function $f(x)$ in Theorem~\ref{thm:main} is somewhat suboptimal.  
The best function $f_{opt}(x)$ for our analysis of Algorithm~\ref{alg:ApprAlg} can be computed using linear programming (with high precision). Using this function $f_{opt}$,
we can achieve an approximation factor $A_{opt}$ better than the approximation factor
$A_{thm} = 3 + 2\log_e 1/\alpha$ guaranteed by Theorem~\ref{thm:main} (for $\alpha \neq 1$).\footnote{It is also
possible to slightly modify Algorithm~\ref{alg:ApprAlg} so that it gets approximation $A_{opt}$
without explicitly computing $f$. We omit the details here.}
While asymptotically $A_{thm}/A_{opt} \to 1$ as $\alpha \to 0$,
$A_{opt}$ is noticeably better than $A_{thm}$ for many values of $\alpha$ that are likely
to appear in practice (say, for $\alpha \in (10^{-8}, 0.1)$).
We list approximation factors $A_{thm}$ and $A_{opt}$ for several values of $\alpha$ in
Table~\ref{fig:table-A}; we also plot the dependence of $A_{thm}$ and
$A_{opt}$ on $\alpha$ in Figure~\ref{fig:plot-A}.

\begin{table}[t]
\caption{Approximation factors $A_{thm}$ and $A_{opt}$ for different $\alpha$-s.}
\label{fig:table-A}
\vskip 0.15in
\begin{center}
\begin{small}
\begin{tabular}{rrrr}
  \toprule
  $\log_e \nicefrac{1}{\alpha}$ & $\nicefrac{1}{\alpha}$ & $A_{thm}$ & $A_{opt}$ \\
  \midrule
    0 & 1 & 3 & 3\\
    1.61  & 5 & 6.22 & 4.32 \\
    2.30 & 10 & 7.61 & 4.63\\
    3.91 & 50 & 10.82 & 6.07\\
    4.61 & 100 & 12.21 & 6.78\\
    6.21 & 500 & 15.43 &  8.69\\
    6.91 &1000 & 16.82 & 9.62\\
    8.52 & 5\,000 & 20.03 & 11.9\\
    10 & 22\,026.5 & 23 & 14.2\\
    15 & $3.3 \times 10^6$ & 33 &  22.6\\
    20 & $4.9\times 10^8$ & 43 & 31.3\\
  \bottomrule
\end{tabular}
\end{small}
\end{center}
\vskip -0.1in
\end{table}

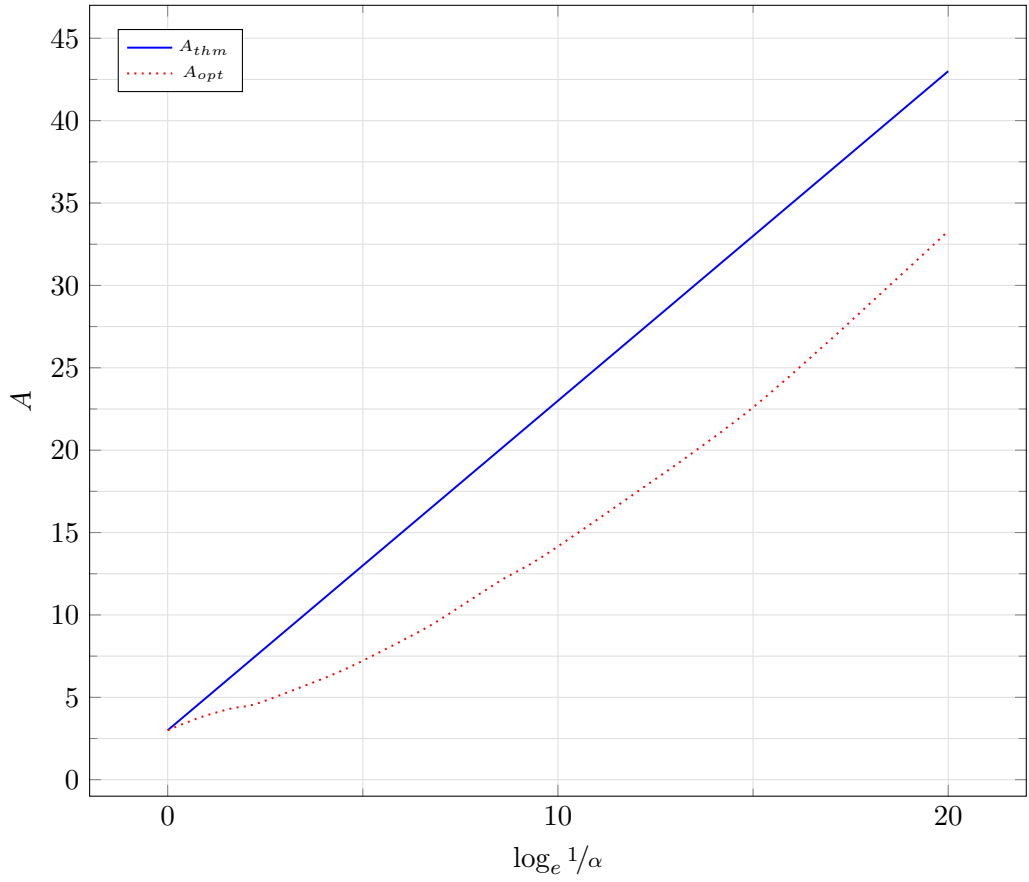
\begin{figure}[t]
\centering
\begin{tikzpicture}
\begin{axis}[
   xlabel=$\log_e \nicefrac{1}{\alpha}$,
   ylabel=$A$,
   grid=both,
   minor grid style={gray!25},
   major grid style={gray!25},
   width=0.85\linewidth,
    legend pos=north west,
    xtick distance=10,
    minor tick num=1
]
\addplot[domain=0:20,line width=0.75pt,solid,color=blue] {3 + 2*x};
\addlegendentry{\tiny $A_{thm}$};
\addplot[line width=0.75pt,dotted,color=red,smooth] %
   table[x=logalpha,y=A,col sep=comma]{data/optimal-A-plot-sparse.csv};
\addlegendentry{\tiny $A_{opt}$};
\end{axis}
\end{tikzpicture}
\caption{Plots of approximation factors $A_{thm}$ and $A_{opt}$.}
\label{fig:plot-A}
\end{figure}

\section{Analysis of the Algorithm for Complete Bipartite Graphs}
\begin{proof}[Proof of Theorem~\ref{thm:mainBipartite}]
The proof is similar to the proof of Theorem~\ref{thm:main}. Without loss of generality we
assume that the scaling parameter $\bfw$ is $1$. Define $f(x)$ as follows
$$
    f(x)= \left\{
\begin{array}{ll}
      1-e^{-Ax}, & \text{if }0\leq x<\frac{1}2-\frac{1}{2A} \\
      1, & \text{otherwise}
\end{array}
\right.
$$
where $A = 5 + 2\log_e 1/\alpha$. Our analysis of the algorithm relies on Theorem~\ref{thm:L4}. Since in the proof of Theorem~\ref{thm:L4}, we assumed that
all edges are present, let us add missing edges (edges inside parts) to the bipartite graph and assign them weight $0$; to be specific, we assume that they are positive edges. (It is important to note that Theorem~\ref{thm:L4} is true even when edges have zero weights). We will still refer to these edges as `missing edges'.

We will show that for every triangle $(u_1,u_2,u_3)$ with edge lengths $(x_1,x_2,x_3)$ (satisfying the triangle inequality) and signature $\sigma = (\sigma_1,\sigma_2, \sigma_3)$, we have
\begin{equation}\label{eq:analysis-main-Bipartite}
ALG^{\sigma}(x_1,x_2,x_3)\leq A\cdot LP^{\sigma}(x_1,x_2,x_3)
\end{equation}
Therefore, by Theorem~\ref{thm:L4}, our algorithm gives an $A$-approximation. In addition to Theorem~\ref{thm:L4} we use Claims~\ref{cl:what_t_i_looks_like}, \ref{cl:negative_pays}, \ref{positive_is_enough}, \ref{cl:positive_pays} and~\ref{cl:all_pay}. Recall that proofs of these claims rely on $f$ being non-decreasing which is satisfied by the above choice. Observe that~(\ref{eq:analysis-main-Bipartite}) is equivalent to
\begin{equation}\label{eq:analysis-main-alt-Bipartite}
\sum_{i=1}^3 w_i t_i \geq 0.
\end{equation}

Observe that if $x_1\geq \frac{1}{A}$, then all $x_i \geq \frac{1}{A}$ and
thus, by Claims~\ref{cl:negative_pays} and~\ref{cl:positive_pays}, all $t_i \geq 0$
and we are done. Similarly, if $x_2\geq \frac{1}2-\frac{1}{2A} \geq
\frac{1}{A}$ (since $A > 3$), then $t_2\geq 0$ and $t_3 \geq 0$;
additionally, $y_2=y_3=1$, thus $t_1 = 0$ and we are done. Furthermore, if $x_3<\frac{1}2-\frac{1}{2A}$ then all $x_i<\frac{1}2-\frac{1}{2A}$ and thus, by Claim~\ref{cl:all_pay}, all $t_i \geq 0$
and we are done. Therefore, we will assume below that $x_1 < \frac1A$, $x_2 < \frac{1}2-\frac{1}{2A}$, and $x_3 \geq \frac{1}2-\frac{1}{2A}$.
Further, by the triangle inequality $x_2\geq x_3 - x_1 \geq \frac{A-1}{2A} - x_1 \geq \frac{A-3}{2A}$. We have (here we use that $A \geq 5$),
$$ x_1 \leq \frac{1}{A} \leq \frac{A-3}{2A} \leq \frac{A-1}{2A} - x_1 \leq x_2 < \frac{A-1}{2A} \leq x_3 \leq x_1 + x_2.$$
We will use below that 
$$e^{A(x_2 - x_1)} \geq e^{A(\frac{A-1}{2A} - 2x_1)} = e^{2 + \log \frac{1}{\alpha} - 2Ax_1} = e^{2(1-Ax_1)}/\alpha \geq 1/\alpha.$$

By Claim \ref{positive_is_enough}, we may also assume that $\sigma_1 = \PlusSign$ and $\sigma_2 = \PlusSign$ (and since we assume that missing edges are positive). By Claims~\ref{cl:negative_pays} and~\ref{cl:positive_pays}, $t_2 \geq 0$ and $t_3\geq 0$ (edges $e_2$ and $e_3$ pay for themselves). If $t_1 \geq 0$, we are done. So we will assume below that $t_1 < 0$. Since $G$ is a complete bipartite graph, a triangle $(u_1,u_2,u_3)$ contains either (i) no edges or (ii) two edges. In case (i) we have $w_1=w_2=w_3=0$ and (\ref{eq:analysis-main-alt-Bipartite}) holds trivially. In case (ii) if $e_1$ is the missing edge then $w_1=0$ and since $t_2,t_3\geq 0$, (\ref{eq:analysis-main-alt-Bipartite}) holds trivially.
It remains to consider three signatures $\sigma = (\PlusSign,\PlusSign, \NoSign)$, $\sigma = (\PlusSign,\NoSign,\PlusSign)$
and $\sigma = (\PlusSign,\NoSign,\MinusSign)$ where $\NoSign$ denotes a missing edge (which by our assumption above is a positive edge).

\paragraph{First, assume that $\sigma =(\PlusSign,\PlusSign, \NoSign)$.}
By Claim~\ref{cl:what_t_i_looks_like}, $t_1 = A(1-y_2)x_1 - (1 - y_2) = - e^{-Ax_2}(1 - Ax_1)$ and $t_2 = A(1-y_1)x_2 - (1 - y_2) = e^{-Ax_1}(Ax_2 - 1)$.
Since $e_3$ is missing, $w_3 = 0$. We have, $w_1 t_1 + w_2 t_2 + w_3 t_3  \geq t_1 + \alpha t_2$ (here we used that $t_1 \leq 0$ and $t_2 \geq 0$). So it suffices to prove that $t_1 + \alpha t_2 > 0$ or, equivalently, $e^{Ax_2}(\alpha t_2 + t_1) \geq 0$. Using that $e^{A(x_2 - x_1)} \geq 1/\alpha$ and $x_2 \geq \frac{A-1}{2A} - x_1$, we get
$$
e^{Ax_2}(\alpha t_2 + t_1)  = \alpha e^{A(x_2 - x_1)}(Ax_2 - 1) - (1 - Ax_1) \geq \alpha \cdot \frac{1}{\alpha} \cdot \Bigl(A \bigl(\frac{A-1}{2A} - x_1\bigr) - 1\Bigr) + Ax_1 - 1
= \frac{A-5}{2} > 0,$$
as required.

\paragraph{Now, assume that $\sigma =(\PlusSign,\NoSign,\PlusSign)$.}
Now we have $t_1 = - e^{-Ax_2}(1 - Ax_1)$ (as before) and
$$t_3 = A(1 - y_1) x_3 - (y_2 - y_1) = A e^{-Ax_1} x_3 - (e^{-Ax_1} - e^{-Ax_2})= e^{-Ax_1} (Ax_3 -1) + e^{-Ax_2}.$$
We prove that $t_1 + \alpha t_3 \geq 0$ or, equivalently, $e^{Ax_2}(\alpha t_3 + t_1) \geq 0$.
Using that $e^{A(x_2 - x_1)} \geq 1/\alpha$ and $x_3 \geq \frac{A-1}{2A}$, we get
\begin{align*}
e^{Ax_2}(\alpha t_3 + t_1) &= \alpha \bigl(e^{A(x_2 - x_1)} (Ax_3 -1)  + 1\bigr) - (1 - Ax_1) \\
&\geq (Ax_3 -1)  + \alpha - (1 - Ax_1) > Ax_3  - 2 \geq \frac{A-1}{2} - 2 \geq 0,
\end{align*}
as required.

\paragraph{Finally, assume that $\sigma =(\PlusSign,\NoSign,\MinusSign)$.}
Now we have $t_1 = - e^{-Ax_2}(1 - Ax_1)$ (as before) and $t_3 = A(1 - y_1) (1 - x_3) - (1 - y_2) = A e^{-Ax_1} (1 - x_3) - e^{-Ax_2}$.
As in the previous case, we prove that $e^{Ax_2}(\alpha t_3 + t_1) \geq 0$. We have,
$$e^{Ax_2}(\alpha t_3 + t_1) = \alpha\bigl(A e^{A(x_2-x_1)} (1 - x_3) - 1\bigr) - (1 - Ax_1)\geq
\underbrace{\alpha\bigl(A e^{A(x_2-x_1)} (1 - x_1 - x_2) - 1\bigr) - (1 - Ax_1)}_{F(x_1,x_2)}.$$
Denote the expression on the right by $F(x_1, x_2)$. We now show that for a fixed $x_1$, $F(x_1, x_2)$ is an increasing function of $x_2$ when $x_2 \in [\frac{A-1}{2A} - x_1, \frac{A-1}{2A})$. Indeed, we have
\begin{align*}
\frac{\partial F(x_1, x_2)}{\partial x_2} &= \alpha A e^{A(x_2 - x_1)} \bigl(A(1 - x_1 - x_2) - 1\bigr)
\geq \alpha A e^{A(x_2 - x_1)} \Bigl(A\Bigl(1 - \frac{1}{A} - \frac{A-1}{2A}\Bigr) - 1\Bigr) \\
&= \alpha A e^{A(x_2 - x_1)} \cdot \frac{A - 3}{2} > 0.
\end{align*}
We conclude that
\begin{align*}
F(x_1, x_2) &\geq F\left(x_1, \frac{A-1}{2A} - x_1\right) = \left.\left(\alpha\bigl(A e^{A(\tilde x_2-x_1)} (1 - x_1-\tilde x_2) - 1\bigr) - (1 - Ax_1)\right)\right|_{\tilde x_2 = \frac{A-1}{2A} - x_1} \\
&\geq \alpha \cdot A\cdot \frac{1}{\alpha}
\cdot \left(1 - \frac{A-1}{2A}\right) - \alpha - (1-Ax_1) = \frac{A + 1}{2} - \alpha -1 +Ax_1 \geq \frac{A + 1}{2} - 2 > 0.
\end{align*}

This concludes the case analysis and the proof of Theorem~\ref{thm:mainBipartite}.
\end{proof} 
\section{Integrality Gap}

In this section, we give a $\Theta(\log 1/\alpha)$ integrality gap example for the LP relaxation presented in Section~\ref{sec:LP}. Notice that in the example each positive edge has a weight of $\bfw^+$ and each negative edge has a weight of $\bfw^-$ with $\bfw^+ \geq \bfw^-$.
\begin{proof}[Proof of Theorem~\ref{thm:intGap}]
Consider a $3$-regular expander $G = (V, E)$ on $n=\Theta((\alpha^2\log^2 \alpha)^{-1})$ vertices.
We say that two vertices $u$ and $v$ are similar if $(u, v) \in E$;
otherwise $u$ and $v$ are dissimilar. That is, the set of positive
edges $E^+$ is $E$ and the set of negative edges $E^-$ is $V\times V \setminus E$. Let $\bfw^+=1$ and $\bfw^-=\alpha.$

\begin{lemma}\label{lemma:gap-computation}
The integrality gap of the Correlation Clustering instance $G_{cc} = (V,E^+, E^-)$ described
above is $\Theta(\log \nicefrac{1}{\alpha})$.
\end{lemma}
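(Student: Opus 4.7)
The plan is to combine a concrete fractional LP solution with a combinatorial lower bound on every integral clustering. Since Theorem~\ref{thm:main} already gives an $O(\log_e(1/\alpha))$-approximation relative to this LP, the integrality gap is automatically $O(\log_e(1/\alpha))$, so the real work is the matching $\Omega(\log_e(1/\alpha))$ lower bound: I would exhibit a fractional LP solution of cost $O(n/\log_e(1/\alpha))$ and show that every integral clustering costs at least $\Omega(n)$.

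For the fractional solution, let $h>0$ be the edge expansion constant of the $3$-regular expander $G$ and let $d_G$ denote its graph metric. Set $D:=\log_2(1/\alpha)$ and take
\[
x_{uv} \ =\ \min\{d_G(u,v)/D,\ 1\}.
\]
Subadditivity of $t\mapsto \min(t,1)$ together with the triangle inequality for $d_G$ gives the LP triangle inequality; the remaining LP constraints are immediate. Every positive edge has $d_G=1$ and contributes $\bfw^+/D=1/D$, so positives contribute $|E^+|/D=3n/(2D)=O(n/\log_e(1/\alpha))$. For negatives only pairs with $d_G(u,v)<D$ contribute; using the $3$-regularity bound $|\{u:d_G(u,v)=k\}|\leq 3\cdot 2^{k-1}$ and the identity $\sum_{j\geq 1} j\cdot 2^{-j}=2$, a direct computation yields
\[
\sum_{\{u,v\}:\,d_G(u,v)<D}\bigl(1-d_G(u,v)/D\bigr)\ \leq\ \frac{3n\cdot 2^D}{2D}.
\]
Since $\alpha\cdot 2^D=1$, the negative contribution is at most $3n/(2D)=O(n/\log_e(1/\alpha))$, so LP-OPT $=O(n/\log_e(1/\alpha))$.

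For the integral lower bound I would do a case analysis in a fixed threshold $K=\Theta(1/\alpha)$. First, if some cluster $C_0$ has $|C_0|\geq n/2$, then since $|E^+(C_0)|\leq 3|C_0|/2$ the number of internal negative edges is $\binom{|C_0|}{2}-|E^+(C_0)|=\Omega(n^2)$, giving cost $\Omega(\alpha n^2)$, which is $\omega(n)$ because $n\gg 1/\alpha$ under the choice $n=\Theta(1/(\alpha^2\log^2\alpha))$. Otherwise every $|C|\leq n/2$. If the clusters of size at least $K$ collectively cover $\geq n/2$ vertices, each such $C$ has $\geq |C|^2/4$ internal negatives (for $|C|$ sufficiently large), so contributes $\Omega(\alpha K|C|)$ in cost and the sum is $\Omega(\alpha K n)=\Omega(n)$. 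Finally, if clusters of size $<K<n/2$ cover $\geq n/2$ vertices, the expander property $|\partial C|\geq h|C|$ applied to these small clusters and counting each cut positive edge at most twice gives $\Omega(hn)$ cut positive edges. In every case the integral cost is $\Omega(n)$, and the gap is $\Omega(\log_e(1/\alpha))$.

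The main obstacle is the case analysis for the integral lower bound: one has to preclude every way a clustering might dodge cost $\Omega(n)$—one giant cluster (caught via the quadratic density of negative edges), several moderately large clusters (caught via the $\alpha|C|^2$ scaling beyond size $\sim 1/\alpha$), or many small clusters (caught via expansion). The specific choice $n=\Theta\bigl((\alpha^2\log^2\alpha)^{-1}\bigr)$ is what makes both bounds simultaneously tight: $n$ is large enough that $D=\log_2(1/\alpha)$ is well below the diameter $\Theta(\log n)$ of $G$ (so the layer-cake ball-growth estimate controlling the negative LP contribution is valid), and large enough that $\alpha n^2=\omega(n)$ in the giant-cluster case.
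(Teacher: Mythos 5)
Your proof is correct and takes essentially the same approach as the paper: a truncated shortest-path metric on a bounded-degree expander for the cheap fractional solution, and a case analysis (giant cluster caught by negative-edge density, otherwise expansion forces $\Omega(n)$ cut positive edges) for the $\Omega(n)$ integral lower bound. The only differences are cosmetic — you truncate at $D=\log_2(1/\alpha)$ with a base-$2$ layer-cake bound where the paper truncates at $\tfrac12\log_3 n$ and crudely bounds the negatives by $\alpha n^{3/2}$; and you split the "no giant cluster" situation into two sub-cases (moderate clusters of size $\geq K$, small clusters handled by edge expansion), whereas the paper dispatches it in one step using a $90\%$ threshold so that expansion on $\min(|C|,n-|C|)$ applies directly.
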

\begin{proof}
Let $d(u,v)$ be the shortest path distance in $G$. Let $\varepsilon = 2/\log_3 n$. We define a feasible metric LP solution as follows:
$x_{uv} = \min (\varepsilon d(u,v), 1).$

Let $LP^+$ be the $LP$ cost of positive edges, and $LP^-$ be the LP cost of negative edges.
The LP cost of every positive edge is $\varepsilon$ since $d(u,v) = 1$ for $(u,v)\in E$.
There are $\nicefrac{3n}{2}$ positive edges in $G_{cc}$. Thus, $LP_+  < 3n/\log_3 n$. We now estimate $LP^-$. For every
vertex $u$, the number of vertices $v$ at distance less than $t$ is upper bounded by $3^t$ because $G$ is a 3-regular graph. Thus,
the number of vertices $v$ at distance less than $\nicefrac{1}{2} \log_3 n$ is upper bounded
by $\sqrt{n}$. Observe that the $LP$ cost of a negative edge $(u,v)$ (which is equal to $\alpha(1-x_{uv})$) is positive if and
only if $d(u,v) < \nicefrac{1}{2} \log_3 n$. Therefore, the number of negative edges with a positive $LP$ cost
incident on any vertex $u$ is at most $\sqrt{n}$. Consequently, the LP cost of all negative
edges is upper bounded by $\alpha n^{\frac{3}{2}}=\Theta(n/\log\nicefrac{1}{\alpha})$. Hence,
$$LP \leq \Theta(n/\log\nicefrac{1}{\alpha}) + 3n/\log_3 n = \Theta(n/\log\nicefrac{1}{\alpha}).$$
Here, we used that $\log n = \Theta(\log\nicefrac{1}{\alpha})$.

We now lower bound the cost of the optimal (integral) solution. Consider an optimal solution. There are two possible cases.
\begin{enumerate}
  \item No cluster contains 90\% of the vertices. Then a constant fraction of positive edges in the expander $G$ are cut and, therefore, the cost of the
  optimal clustering is at least $\Theta(n)$.
  \item One of the clusters contains at least 90\% of all vertices. Then all negative edges in that cluster are in disagreement with the clustering.
  There are at least  $\binom{0.9n}{2} - m = \Theta(n^2)$  such edges. Their cost is at least $\Omega(\alpha n^2)$.
\end{enumerate}

We conclude that the cost of the optimal solution is at least $\Theta(n)$ and, thus, the integrality gap is $\Theta(\log (1/\alpha))$.
\end{proof}

We note that in this example $\log (1/\alpha) = \Theta(\log n)$. However, it is easy to construct an integrality gap example where $\log (1/\alpha) \ll \Theta(\log n)$. To do so, we pick the integrality gap example constructed above and create $k\gg n$ disjoint copies of it. To make the graph complete, we add negative edges with (fractional) LP value equal to $1$ to connect each copy to every other copy of the graph. The new graph has $kn \gg n$ vertices. However, the integrality gap remains the same, $\Theta(\log \nicefrac{1}{\alpha})$.
\end{proof}
Now we give a $\Theta(\log 1/\alpha)$ integrality gap example when $G$ is a complete bipartite graph.
\begin{proof}[Proof of Theorem~\ref{thm:intGapBipartite}]
The proof is very similar to that of Theorem~\ref{thm:intGap}. We start with a 3-regular \textit{bipartite} expander $G = (L, R, E)$ on $n=\Theta((\alpha^2\log^2 \alpha)^{-1})$ vertices (e.g., we can use a 3-regular bipartite Ramanujan expander constructed by~\citet*{MSS13}). Then we
define a correlation clustering instance as follows: $G_{cc}=(L,R,E^+,E^-)$ where $E^+=E$ and $E^-=(L\times R) \setminus E$; let $\bfw^+=1$ and $\bfw^-=\alpha$.
The proof of Lemma~\ref{lemma:gap-computation} can be applied to $G_{cc}$; we only need to note that if
a cluster contains at least 90\% of the vertices, then there are at least $\Theta(n^2)$ edges of $G_{cc}$ between vertices in the cluster.
It follows that the integrality gap is $\Omega(\log(1/\alpha))$.
\end{proof}

\bibliography{corr-clust}

\begin{thebibliography}{18}
\providecommand{\natexlab}[1]{#1}
\providecommand{\url}[1]{\texttt{#1}}
\expandafter\ifx\csname urlstyle\endcsname\relax
  \providecommand{\doi}[1]{doi: #1}\else
  \providecommand{\doi}{doi: \begingroup \urlstyle{rm}\Url}\fi

\bibitem[Ailon et~al.(2008)Ailon, Charikar, and Newman]{ACN08}
Nir Ailon, Moses Charikar, and Alantha Newman.
\newblock Aggregating inconsistent information: ranking and clustering.
\newblock \emph{Journal of the ACM (JACM)}, 55\penalty0 (5):\penalty0 23, 2008.

\bibitem[Ailon et~al.(2013)Ailon, Chen, and Xu]{ACX13}
Nir Ailon, Yudong Chen, and Huan Xu.
\newblock Breaking the small cluster barrier of graph clustering.
\newblock In \emph{International Conference on Machine Learning}, pages
  995--1003, 2013.

\bibitem[Bansal et~al.(2004)Bansal, Blum, and Chawla]{BBC04}
Nikhil Bansal, Avrim Blum, and Shuchi Chawla.
\newblock Correlation clustering.
\newblock \emph{Machine learning}, 56\penalty0 (1-3):\penalty0 89--113, 2004.

\bibitem[Boldi and Vigna(2004)]{dataset1}
Paolo Boldi and Sebastiano Vigna.
\newblock The {W}eb{G}raph framework {I}: {C}ompression techniques.
\newblock In \emph{Proc. of the Thirteenth International World Wide Web
  Conference}, pages 595--601, 2004.

\bibitem[Boldi et~al.(2004)Boldi, Codenotti, Santini, and Vigna]{dataset3}
Paolo Boldi, Bruno Codenotti, Massimo Santini, and Sebastiano Vigna.
\newblock Ubicrawler: A scalable fully distributed web crawler.
\newblock \emph{Software: Practice \& Experience}, 34\penalty0 (8):\penalty0
  711--726, 2004.

\bibitem[Boldi et~al.(2011)Boldi, Rosa, Santini, and Vigna]{dataset2}
Paolo Boldi, Marco Rosa, Massimo Santini, and Sebastiano Vigna.
\newblock Layered label propagation: A multiresolution coordinate-free ordering
  for compressing social networks.
\newblock In \emph{Proceedings of the International Conference on World Wide
  Web}, pages 587--596, 2011.

\bibitem[Boldi et~al.(2014)Boldi, Marino, Santini, and Vigna]{dataset4}
Paolo Boldi, Andrea Marino, Massimo Santini, and Sebastiano Vigna.
\newblock {BUbiNG}: Massive crawling for the masses.
\newblock In \emph{Proceedings of the Companion Publication of the
  International Conference on World Wide Web}, pages 227--228, 2014.

\bibitem[Charikar et~al.(2003)Charikar, Guruswami, and Wirth]{CGW03}
Moses Charikar, Venkatesan Guruswami, and Anthony Wirth.
\newblock Clustering with qualitative information.
\newblock In \emph{IEEE Symposium on Foundations of Computer Science}.
  Citeseer, 2003.

\bibitem[Chawla et~al.(2015)Chawla, Makarychev, Schramm, and
  Yaroslavtsev]{CMSY15}
Shuchi Chawla, Konstantin Makarychev, Tselil Schramm, and Grigory Yaroslavtsev.
\newblock Near optimal {LP} rounding algorithm for correlation clustering on
  complete and complete $k$-partite graphs.
\newblock In \emph{Proceedings of the Symposium on Theory of Computing}, pages
  219--228, 2015.

\bibitem[Demaine et~al.(2006)Demaine, Emanuel, Fiat, and Immorlica]{DEFI06}
Erik~D Demaine, Dotan Emanuel, Amos Fiat, and Nicole Immorlica.
\newblock Correlation clustering in general weighted graphs.
\newblock \emph{Theoretical Computer Science}, 361\penalty0 (2-3):\penalty0
  172--187, 2006.

\bibitem[Elsner and Schudy(2009)]{EW2009}
Micha Elsner and Warren Schudy.
\newblock Bounding and comparing methods for correlation clustering beyond ilp.
\newblock In \emph{Proceedings of the Workshop on Integer Linear Programming
  for Natural Langauge Processing}, pages 19--27. Association for Computational
  Linguistics, 2009.

\bibitem[Garg et~al.(1996)Garg, Vazirani, and Yannakakis]{GVY96}
Naveen Garg, Vijay~V Vazirani, and Mihalis Yannakakis.
\newblock Approximate max-flow min-(multi) cut theorems and their applications.
\newblock \emph{SIAM Journal on Computing}, 25\penalty0 (2):\penalty0 235--251,
  1996.

\bibitem[Makarychev et~al.(2015)Makarychev, Makarychev, and
  Vijayaraghavan]{MMV15-CC}
Konstantin Makarychev, Yury Makarychev, and Aravindan Vijayaraghavan.
\newblock Correlation clustering with noisy partial information.
\newblock In \emph{Conference on Learning Theory}, pages 1321--1342, 2015.

\bibitem[Marcus et~al.(2013)Marcus, Spielman, and Srivastava]{MSS13}
Adam~W. Marcus, Daniel~A. Spielman, and Nikhil Srivastava.
\newblock Interlacing families {I}: Bipartite {R}amanujan graphs of all
  degrees.
\newblock In \emph{Proceedings of the Symposium on Foundations of Computer
  Science}, pages 529--537, 2013.

\bibitem[Mathieu and Schudy(2010)]{MW10}
Claire Mathieu and Warren Schudy.
\newblock Correlation clustering with noisy input.
\newblock In \emph{Proceedings of the Symposium on Discrete Algorithms}, pages
  712--728, 2010.

\bibitem[Pan et~al.(2015)Pan, Papailiopoulos, Oymak, Recht, Ramchandran, and
  Jordan]{pan2015}
Xinghao Pan, Dimitris Papailiopoulos, Samet Oymak, Benjamin Recht, Kannan
  Ramchandran, and Michael~I Jordan.
\newblock Parallel correlation clustering on big graphs.
\newblock In \emph{Advances in Neural Information Processing Systems}, pages
  82--90, 2015.

\bibitem[Tang et~al.(2016)Tang, Andres, Andriluka, and Schiele]{tang2016multi}
Siyu Tang, Bjoern Andres, Mykhaylo Andriluka, and Bernt Schiele.
\newblock Multi-person tracking by multicut and deep matching.
\newblock In \emph{European Conference on Computer Vision}, pages 100--111,
  2016.

\bibitem[Tang et~al.(2017)Tang, Andriluka, Andres, and
  Schiele]{tang2017multiple}
Siyu Tang, Mykhaylo Andriluka, Bjoern Andres, and Bernt Schiele.
\newblock Multiple people tracking by lifted multicut and person
  re-identification.
\newblock In \emph{Proceedings of the Conference on Computer Vision and Pattern
  Recognition}, pages 3539--3548, 2017.

\end{thebibliography}
\bibliographystyle{plainnat}
\appendix 
\section{Proof of Theorem~\ref{thm:L4}}

For the sake of completeness we include the proof of Theorem~\ref{thm:L4} (see \cite{ACN08} and \cite{CMSY15}).

\begin{proof}[Proof of Theorem~\ref{thm:L4}] Our first task is to express the cost of violations made by Algorithm~\ref{alg:ApprAlg} and the LP weight in terms of $ALG^\sigma(\cdot)$ and $LP^\sigma(\cdot)$, respectively. In order to do this, we consider the cost of violations made by the algorithm at each step.

Consider step $t$ of the algorithm. Let $V_t$ denote the set of active (yet unclustered) vertices at the start of step $t$. Let $w\in V_t$ denote the pivot chosen at step $t$.
The algorithm chooses a set $S_t \subseteq V_t$ as a cluster and removes it from the graph. Notice that
for each $u \in S_t$, the constraint imposed by each edge of type $(u,v) \in E^+ \cup E^-$ is satisfied or violated right after step $t$. Specifically, if $(u,v)$ is a positive edge, then the constraint $(u,v)$ is violated if exactly one of the vertices $u,v$ is in $S_t$. If $(u,v)$ is a negative constraint, then it is violated if both $u,v$ are in $S_t$. Denote the weight of violated constraints at step $t$ by $ALG_t$. Thus,
\begin{align*}
ALG_t=&\sum\limits_{\substack{(u,v) \in E^+\\u,v \in V_t}} \bfw_{uv}\cdot\ONE \left(u \in S_t, v \not\in S_t \mbox{ or }u \not\in S_t, v \in S_t\right)+\sum\limits_{\substack{(u,v) \in E^-\\u,v \in V_t}} \bfw_{uv}\cdot\ONE \left(u \in S_t, v\in S_t\right).
\end{align*}

Similarly, we can quantify the LP weight removed by the algorithm at step $t$, which we denote by $LP_t$. 
We count the contribution of all edges $(u,v) \in E^+ \cup E^-$ such that $u \in S_t$ or $v \in S_t$. Thus,
\begin{align*}
LP_t =& \sum_{\substack{(u,v) \in E^+\\u,v \in V_t}} \bfw_{uv} x_{uv} \cdot \ONE(u \in S_t \text{ or } v \in S_t)+ \sum_{\substack{(u,v) \in E^-\\u,v \in V_t}} \bfw_{uv} (1 - x_{uv}) \cdot \ONE(u \in S_t \text{ or } v \in S_t)
\end{align*}

Note that the cost of the solution produced by the algorithm is the sum of the violations across all steps, that is $ALG = \sum_t ALG_t$. Moreover, as every edge is removed exactly once from the graph, we can see that $LP = \sum_t LP_t$. We will charge the cost of the violations of the algorithm at step $t$, $ALG_t$, to the LP weight removed at step $t$, $LP_t$. Hence, if we show that $\mathbb{E}[ALG_t] \leq \rho \mathbb{E}[LP_t]$ for every step $t$, then we can conclude that the approximation factor of the algorithm is at most $\rho$, since
$$
\mathbb{E}[ALG] = \mathbb{E}\bigg[\sum_t ALG_t \bigg] \leq \rho \cdot \mathbb{E}\bigg[\sum_t LP_t \bigg] =  \rho \cdot LP.
$$

We now express $ALG_t$ and $LP_t$ in terms of $cost(\cdot)$ and $lp(\cdot)$ which are defined in Section~\ref{triple_Analysis}. This will allow us to group together the terms for each triplet $u,v,w$ in the set of active vertices and thus write $ALG_t$ and $LP_t$ in terms of $ALG^\sigma(\cdot)$ and $LP^\sigma(\cdot)$, respectively.

For analysis, we assume that for each vertex $u \in V$, there is a positive (similar) self-loop, and thus we can define $cost(u,u\given w)$ and $lp(u,u \given w)$ formally as follows:
$cost(u,u \given w) = \pr(u \in S, u \not\in S \given p = w)
= 0$ and $lp(u,u \given w) = x_{uu} \cdot \pr(u \in S\given p = w) = 0$ (recall that $x_{uu}=0$).
\begin{align}\label{ALGcost}
\mathbb{E}[ALG_t \given  V_t&] = \sum_{\substack{(u,v) \in E\\u,v \in V_t}} \bigg( \frac{1}{|V_t|} \sum_{w \in V_t} \bfw_{uv} \cdot cost(u,v\given w)\bigg)= \frac{1}{2|V_t|}\sum_{\substack{u,v,w \in V_t \\ u \neq v}} \bfw_{uv}\cdot cost(u,v\given w)
\end{align}

\begin{align}\label{LPcost}
\mathbb{E}[LP_t \given V_t] &= \sum_{\substack{(u,v) \in E\\u,v \in V_t}} \bigg( \frac{1}{|V_t|} \sum_{w \in V_t} \bfw_{uv} \cdot lp(u,v\given w)\bigg)= \frac{1}{2|V_t|}\sum_{\substack{u,v,w \in V_t \\ u \neq v}} \bfw_{uv} \cdot lp(u,v\given w)
\end{align}

We divide the expressions on the right hand side by $2$ because the terms $cost(u,v \given w)$ and $lp(u,v \given w)$ are counted twice. Now adding the contribution of terms $cost(u,u \given w)$ and $lp(u,u \given w)$ (both equal to $0$) to~(\ref{ALGcost}) and~(\ref{LPcost}), respectively and grouping the terms containing $u,v$ and $w$ together, we get,
\begin{align*}
\mathbb{E}[ALG_t \given V_t] =& \frac{1}{6|V_t|}\sum_{u,v,w \in V_t}\bigg( \bfw_{uv}\cdot cost(u,v\given w)+ \bfw_{uw}\cdot cost(u,w\given v) + \bfw_{wv}\cdot cost(w,v\given u)\bigg)\\
=&\frac{1}{6|V_t|}\sum_{u,v,w \in V_t} ALG^\sigma(x,y,z)
\end{align*}
and
\begin{align*}
\mathbb{E}[LP_t \given V_t] =& \frac{1}{6|V_t|}\sum_{u,v,w \in V_t}\bigg(\bfw_{uv}\cdot lp(u,v\given w)+ \bfw_{uw}\cdot lp(u,w\given v) + \bfw_{wv}\cdot lp(w,v\given u)\bigg)\\
=&\frac{1}{6|V_t|}\sum_{u,v,w \in V_t}LP^\sigma(x,y,z)
\end{align*}

Thus, if $ALG^\sigma(x,y,z) \leq \rho LP^\sigma(x,y,z)$ for all signatures and edge lengths $x,y,z$ satisfying the triangle inequality, then $\mathbb{E}[ALG_t \given V_t] \leq \rho \cdot \mathbb{E}[LP_t \given V_t]$, and, hence, $\mathbb{E}[ALG] \leq \rho \cdot \mathbb{E}[LP]$ which finishes the proof.
\end{proof}

\end{document}